\setlist{nolistsep}  
\definecolor{darkred}{rgb}{.3, 0, 0}
\definecolor{darkgreen}{rgb}{0, .3, 0}
\newfont{\bbb}{bbm10 scaled 1000}                     
\newcommand{\IN}{\mbox{\bbb N}}                       
\newcommand{\IT}{\mbox{\bbb T}}                       
\newcommand{\T}{{\rm T}}                              
\newcommand{\R}{\mathrel{\cal R}}                     
\newcommand{\BRR}{\R}                                 
\newcommand{\BR}{\mathrel{\color{blue}\cal R}}        
\newtheorem{definition}{Definition}
\newtheorem{theorem}{Theorem}
\newtheorem{lemma}{Lemma}
\newtheorem{proposition}{Proposition}
\newtheorem{remark}{Remark}
\newtheorem{example}{Example}
\newcommand{\df}[1]{Def.~\ref{df:#1}}
\newcommand{\thm}[1]{Thm.~\ref{thm:#1}}
\newcommand{\pr}[1]{Prop.~\ref{pr:#1}}
\newenvironment{proof}{\begin{trivlist} \item[\hspace{\labelsep}\bf Proof:]}{\end{trivlist}}
\newcommand{\qed}{\hfill$\Box$}
\def\comesfrom{\@transition\leftarrowfill}
\def\goesto{\@transition\rightarrowfill}
\def\ngoesto{\@transition\nrightarrowfill}
\def\Goesto{\@transition\Rightarrowfill}
\def\nGoesto{\@transition\nRightarrowfill}
\def\xmapsto{\@transition\mapstofill}
\def\nxmapsto{\@transition\nmapstofill}
\def\@transition#1{\@@transition{#1}}
\newbox\@transbox
\newbox\@arrowbox
\newbox\@downbox
\def\@@transition#1#2%
\wd\@transbox{#1}
\@transbox\hbox{$\mathop{\box\@arrowbox}\limits^{\box\@transbox}$}
\def\nrightarrowfill{$\m@th\mathord-\mkern-6mu%
  \cleaders\hbox{$\mkern-2mu\mathord-\mkern-2mu$}\hfill
  \mkern-6mu\mathord\not\mkern-2mu\mathord\rightarrow$}
\def\Rightarrowfill{$\m@th\mathord=\mkern-6mu%
  \cleaders\hbox{$\mkern-2mu\mathord=\mkern-2mu$}\hfill
  \mkern-6mu\mathord\Rightarrow$}
\def\nRightarrowfill{$\m@th\mathord=\mkern-6mu%
  \cleaders\hbox{$\mkern-2mu\mathord=\mkern-2mu$}\hfill
  \mkern-6mu\mathord\not\mathord\Rightarrow$}
\def\mapstofill{$\m@th\mathord\mapstochar\mathord-\mkern-6mu%
  \cleaders\hbox{$\mkern-2mu\mathord-\mkern-2mu$}\hfill
  \mkern-6mu\mathord\rightarrow$}
\def\nmapstofill{$\m@th\mathord\mapstochar\mathord-\mkern-6mu%
  \cleaders\hbox{$\mkern-2mu\mathord-\mkern-2mu$}\hfill
  \mkern-6mu\mathord\not\mkern-2mu\mathord\rightarrow$}
\newcommand{\ar}[1]{\mathrel{\goesto{#1}}}            
\newcommand{\nar}[1]{{\ngoesto{#1\;}}}                
\newcommand{\goto}[1]{\stackrel{#1}{\longrightarrow}} 
\newcommand{\hoto}[1]{\mathbin{\stackrel{#1}{\raisebox{0pt} 
        [3pt][0pt]{$\scriptstyle--\rightarrow$}}}}          
\newcommand{\gonotto}[1]{\mbox{$\,\,\,\not\!\!\!\stackrel{#1~}{\longrightarrow}$}} 
\newcommand{\honotto}[1]{\mbox{$\,\,\,\,\not\!\!\!\!\stackrel{#1~}{\raisebox{0pt}
        [3pt][0pt]{$\scriptstyle--\rightarrow$}}$}}   
\newcommand{\plat}[1]{\raisebox{0pt}[0pt][0pt]{#1}}   
\newcommand{\rec}[1]{\plat{$			      
	\stackrel{\mbox{\tiny $/$}}
	{\raisebox{-.3ex}[.3ex]{\tiny $\backslash$}}
	\!\!#1\!\!
	\stackrel{\mbox{\tiny $\backslash$}}
	{\raisebox{-.3ex}[.3ex]{\tiny $/$}} $}}
\newcommand{\eqa}{\mathrel{\plat{$\stackrel{\alpha}=$}}} 
\newcommand{\Var}{{\it Var}}                          
\newcommand{\var}{{\it var}}                          
\newcommand{\PP}{P \vdash}
\newcommand{\weg}[1]{}                                
\title{Lean and Full Congruence Formats for Recursion}
\author{
\IEEEauthorblockA{\Large Rob van Glabbeek}
\IEEEauthorblockN{\small Data61, CSIRO, Sydney, Australia\\
   School of Computer Science \& Engineering,
   University of New South Wales, Sydney, Australia
}}
\begin{document}
\maketitle
\setcounter{footnote}{0}

\begin{abstract}
In this paper I distinguish two (pre)congruence requirements for semantic
equivalences and preorders on processes given as closed terms in a system
description language with a recursion construct.  A \emph{lean congruence}
preserves equivalence when replacing closed subexpressions of a process by
equivalent alternatives. A \emph{full congruence} moreover allows replacement
within a recursive specification of subexpressions that may contain recursion
variables bound outside of these subexpressions.

I establish that bisimilarity is a lean (pre)congruence for recursion for all
languages with a structural operational semantics in the ntyft/ntyxt format.
Additionally, it is a full congruence for the tyft/tyxt format.
\end{abstract}

\section{Introduction}

\emph{Structural Operational Semantics} \cite{Mi90ccs,Pl04} is one of the main methods for
defining the meaning of system description languages like CCS \cite{Mi90ccs}.
A system or \emph{process} is represented by a closed term built from a collection of
operators, process variables and usually a recursion construct, and the behaviour of a
process is given by its collection of (outgoing) transitions, each specifying the action
the process performs by taking this transition, and the process that results after doing so.
The transitions between states are obtained from a set of proof rules called \emph{transition rules}.

For purposes of representation and verification, several behavioural equivalence relations
have been defined on processes, of which the most well-known is \emph{(strong) bisimilarity}
\cite{Mi90ccs}. To allow compositional system verification, such equivalences
need to be \emph{congruences} for the operators under consideration, meaning that the
equivalence class of an $n$-ary operator $f$ applied to arguments $p_1,\ldots,p_n$ is
completely determined by the equivalence classes of these arguments.

Equally important is that the chosen equivalence relation $\sim$ is a congruence for recursion.
Recursion allows the specification of a process as a canonical solution of an equation $X=E(X)$.%
\footnote{The particular solution supplied by structural operational semantics is the one whose
  transitions are determined by the transition rules.}
Here $E(X)$ is an expression that may contain the variable $X$.
If $W$ is the collection of other variables occurring in $E(X)$, not \emph{bound}
by the recursive specification, then the canonical solution of $X=E(X)$
is a $W$-ary function that returns a process for each valuation of these
variables as processes. I call $\sim$ a \emph{lean congruence} for recursion if
each such operator satisfies the above-mentioned congruence requirement.

Take for example $E(X)$ to be $a.X + Y$ in the language CCS of {\sc Milner} \cite{Mi90ccs}.
Then $W=\{Y\}$. Let $\sim$ be bisimilarity, so that $b.0 \sim b.0 + b.0$ \cite{Mi90ccs}.
Now the lean congruence requirement for $\sim$ insists that the selected solutions of the recursive equations
$X = a.X + b.0$ and $X = a.X + (b.0 + b.0)$, obtained from $X=a.X+Y$ by substituting each of these
bisimilar processes for $Y$, are again bisimilar.

The lean congruence requirement plays a key r\^ole in the study of expressiveness of system description languages~\cite{vG12}.
There, \emph{correct translations} of one language into another up to a semantic equivalence $\sim$ are defined;
and expressiveness hierarchies---one for each choice of $\sim$---are defined in terms of those translations. 
However, a correct translation can exist only when $\sim$ is a lean congruence for the source
language, as well as for the source's image within the target language.

If $F(X)$ is an expression like $E(X)$, for simplicity assuming that neither
contains variables other than $X$, and $E(p) \sim F(p)$ regardless which process $p$ is
substituted for the variable $X$, then the \emph{full} congruence property
demands that the selected solutions of the equations $X=E(X)$ and $X=F(X)$ are again equivalent.
As a CCS example, suppose that a process is given as the solution of
the equation $X=a.X + a.X$. Using the idempotence of $+$ under bisimilarity, one
can now proceed to think of the same process, up to bisimilarity, as the solution of $X=a.X$. This type of
reasoning is a central component in system verification by equivalence checking \cite{Ba90,Fok00,BBR10,GM14},
as applied in successful verification toolsets such as
\href{http://cadp.inria.fr/}{CADP} \cite{GLMS11} and
\href{http://www.win.tue.nl/mcrl2/}{mCRL2} \cite{GM14}.
Yet it is valid only if bisimilarity is a full congruence for recursion.

In order to streamline the process of proving that a certain equivalence is a congruence
for certain operators, and to guide sensible language definitions, syntactic criteria
(\emph{congruence formats}) for the transition rules in structural operational semantics
have been developed, ensuring that the equivalence is a congruence for any operator
specified by rules that meet these criteria.  The first of these was proposed by
{\sc Robert de Simone} in \cite{dS84,dS85} and is now called the \emph{De Simone format}.
A  generalisation featuring transition rules with negative premises is the
\emph{GSOS format} of {\sc Bloom, Istrail \& Meyer} \cite{BIM95}, and a generalisation with
\emph{lookahead} is the \emph{tyft/tyxt} format of {\sc Groote \& Vaandrager} \cite{GrV92}.
The \emph{ntyft/ntyxt} format of {\sc Groote} \cite{Gr93} allows both
negative premises and lookahead and generalises the GSOS as well as the tyft/tyxt format.
All this work provides congruence formats for (strong) bisimilarity.
Congruence formats for other \emph{strong} semantic equivalences---treating the internal
action $\tau$ like any other action---appear in \cite{BFG04,FG17}.\footnote{These congruence formats also
apply to behavioural \emph{preorders}, and then ensure that such a preorder is a \emph{precongruence}.}
Formats for \emph{weak} semantics---abstracting from internal activity---can be found, e.g., in 
\cite{Ul92a,Bl95,Fok00b,Ul00,UP02,vG11,FGW12,FG16a}.

Extensions to probabilistic systems appear for instance in
\cite{Bartels02,LT09,KS13,GS13,MM14,BM15,DGL16}.
Rule formats ensuring properties of operators other than being a (pre)congruence appear in
\cite{MRG05} (commutativity), \cite{CMR08} (associativity), \cite{ACIMR11} (zero and unit elements),
\cite{ACIMR12} (distributivity) and \cite{ABIMR12} (idempotence). 
Overviews on work on congruence formats and other rule formats, with many more references, can be
found in \cite{AFV00,GMR06}.

Yet, to the best of my knowledge, no one has proposed a congruence format for recursion.
This hiatus is addressed here.
I establish that bisimilarity is a lean congruence for recursion for all languages
with a structural operational semantics in the ntyft/ntyxt format.\footnote{Some of those languages
  have a 3-valued transition system semantics, where bisimilarity becomes an asymmetric preorder.
  Here I establish that it is a precongruence.}
I did not succeed in showing that it is even a full congruence for all ntyft/ntyxt languages;
nor did I find a counterexample. Even for GSOS languages this remains an open question.
However, I show that bisimilarity is a full congruence for recursion for all tyft/tyxt languages.

My proof strategy follows the traditional method of \cite{BIM95,GrV92,BolG96}.
However, for this to work smoothly, I present a new formulation---better fitted to my
application---of the well-founded semantics of transition system specifications with negative
premises, and show its consistency with previous formulations.

I could not establish the full congruence result directly, without using the lean
congruence result as an intermediate step, even when restricting the latter to the tyft/tyxt format.
Thus, I see no way around a sequence of two proofs with a large overlap.

The method of \emph{modal decomposition} \cite{FGW06} yields alternative congruence proofs
for operators specified in the tyft/tyxt and GSOS formats \cite{FGW06}.
Extending this method to deal with recursion might be a way to extend my full
congruence result to transition rules with negative premises.

Providing (lean and full) congruence formats for recursion for equivalences and preorders other than
bisimilarity, as well as for weak versions of bisimilarity \cite{Mi90ccs,vGW89}---supporting
abstraction from internal actions---remains an important open problem.

\section{Transition system specifications and their meaning}

In this paper $\Var$ and $A$ are two sets of {\em variables} and {\em
actions}. Many concepts that will appear are parameterised by the
choice of $\Var$ and $A$, but as in this paper this choice is fixed, a
corresponding index is suppressed.

\begin{definition}[\emph{Signatures}]\label{df:signature}\rm
A {\em function declaration} is a pair $(f,n)$ of a {\em function
symbol} $f \not\in \Var$ and an {\em arity} $n \in \IN$.\footnote{This work generalises seamlessly to
  operators with infinitely many arguments. Such operators occur, for instance, in \cite[Appendix A.2]{BrGH16b}.
  Hence one may take $n$ to be any ordinal.
  An operator, like the \emph{summation} or \emph{choice} of CCS \cite{Mi90ccs}, that actually takes any \emph{set}
  of arguments, needs to be simulated by a family of operators with a \emph{sequence} of arguments (but
  yielding the same value upon reshuffling of the arguments), one for each cardinality of this set.} 
A function declaration $(c,0)$ is also called a {\em constant declaration}.
A {\em signature} is a set of function declarations. The set
$\IT(\Sigma)$ of {\em terms with recursion} over a signature $\Sigma$ is defined
inductively by:
\begin{itemize}
\item $\Var \subseteq \IT(\Sigma)$,
\item if $(f,n) \mathbin\in \Sigma$ and $t_1,
  ...,t_n \mathbin\in \IT(\Sigma)$ then
$f(t_1,...,t_n) \mathbin\in \IT(\Sigma)$,
\item If $V_S \subseteq \Var$, $~S:V_S \rightarrow \IT(\Sigma)$ and $X\in V_S$,
then $\rec{X|S}\in \IT(\Sigma)$.
\end{itemize}
A term $c()$ is abbreviated as $c$. A function $S$ as appears in
the last clause is called a {\em recursive specification}.  A
recursive specification $S$ is often displayed as $\{X\mathbin=S_X \mid X \mathbin\in V_S\}$.
An occurrence of a variable $y$ in a term $t$ is {\em free} if it does not
occur in a subterm of $t$ of the form $\rec{X|S}$ with $y \in V_S$.
Let $\var(t)$ denote the set of variables occurring free in a
term $t\in\IT(\Sigma)$, and let $\IT(\Sigma,W)$ be
the set of terms $t$ over $\Sigma$ with $\var(t)\subseteq W$.
$\T(\Sigma):=\IT(\Sigma,\emptyset)$ is set of \emph{closed} terms over $\Sigma$.
\end{definition}

\begin{example}\label{ex:recursion}
Let $\Sigma$ contain three unary functions $a.\_$, $b.\_$ and $d.\_$, and one infix-written binary
function $\|$. Let $X,Y,z\mathbin\in\Var$. Then $S=\{~X=(a.X)\|(b.Y),\quad  Y=(d.Y)\|(X\|z)~\}$ is a recursive
specification, so $\rec{X|S}\mathbin\in\IT(\Sigma)$. Since $V_S\mathbin=\{X,Y\}$, the only variable that occurs free
in this term is $z$.
\end{example}
As illustrated here, I often choose upper case letters for bound variables (the ones occurring in a set
$V_S$) and lower case ones for variables occurring free; this is a convention only.

A recursive specification $S$ is meant to denote a $V_S$-tuple (in the example above a pair) of
processes that---when filled in for the variables in $V_S$---forms a solution to the equations
in $S$.\footnote{When $S$ contains free variables from a set $W$, this solution is parameterised by the choice of a
valuation of these variables as processes, thereby becoming a $W$-ary function.}\linebreak[4]
The term $\rec{X|S}$ denotes the $X$-component of such a tuple.

\begin{definition}[\emph{Substitution}]\label{df:substitutions}\rm
A {\em $\Sigma$-substitution} $\sigma$ is a partial function from $\Var$ to
$\IT(\Sigma)$; it is \emph{closed} if it is a total function from $\Var$ to $\T(\Sigma)$.
If $\sigma$ is a substitution and $S$ any syntactic
object, then $S[\sigma]$ denotes the object obtained from $S$ by
replacing, for $x$ in the domain of $\sigma$, every free occurrence of $x$
in $S$ by $\sigma(x)$, while renaming bound variables if necessary to prevent
name-clashes. In that case $S[\sigma]$ is called a {\em substitution instance} of $S$.
A substitution instance $t[\sigma]$ where $\sigma$
is given by $\sigma(x_i)=u_i$ for $i\in I$ is denoted as $t[u_i/x_i]_{i\in I}$,
and for $S$ a recursive specification $\rec{t|S}$ abbreviates $t[\rec{Y|S}/Y]_{Y\in V_S}$.
\end{definition}
\begin{example}
Extend $\Sigma$ from Ex.~\ref{ex:recursion} with a constant $c$. Then
$\rec{X|S}[b.c/z] = \rec{X|\{X{=}(a.X)\|(b.Y),\;  Y{=}(d.Y)\|(X\|b.c)\}}$,
$\rec{X|S}[X/z] = \rec{Z|\{Z{=}(a.Z)\|(b.Y),\quad  Y{=}(d.Y)\|(Z\|X)\}}$ and
$\rec{X|S}[b.c/Y] = \rec{X|S}$.
\end{example}

\noindent
Structural operational semantics \cite{Pl04} defines the meaning of system description languages
whose syntax is given by a signature $\Sigma$. It generates a transition system in which the
states, or \emph{processes}, are the closed terms over $\Sigma$---representing the remaining system
behaviour from that state---and transitions between processes are supplied with labels. The
transitions between processes are obtained from a transition system specification, which consists of
a set of transition rules.

\begin{definition}[\emph{Transition system specifications}]\label{df:TSS}\rm
Let $\Sigma$ be a signature. A {\em positive $\Sigma$-literal} is an
expression \plat{$t \ar{a} t'$} and a {\em negative $\Sigma$-literal} an
expression \plat{$t \nar{a}$} with $t,t'\in\IT(\Sigma)$ and $a \in A$.
For $t,t' \in \IT(\Sigma)$ the literals $t \ar{a} t'$ and $t \nar{a}$
are said to {\em deny} each other.
A {\em transition rule} over $\Sigma$ is an expression of the form
$\frac{H}{\alpha}$ with $H$ a set of $\Sigma$-literals (the {\em
premises} or {\em antecedents} of the rule) and $\alpha$ a
positive $\Sigma$-literal (the {\em conclusion}).
The terms at the left- and right-hand side of $\alpha$ are
the \emph{source} and \emph{target} of the rule.
A rule \plat{$\frac{H}{\alpha}$} with $H=\emptyset$ is also written $\alpha$.
A literal or transition rule is {\em closed} if it contains no free variables.
A {\em transition system specification (TSS)} is a pair $(\Sigma,R)$
with $\Sigma$ a signature and $R$ a set of transition rules over $\Sigma$; it is
{\em positive} if all antecedents of its rules are positive.
\end{definition}
The concept of a (positive) TSS presented above was introduced in
{\sc Groote \& Vaandrager} \cite{GrV92}; the negative premises \plat{$t\nar{a}$}
were added in {\sc Groote} \cite{Gr93}. The notion
generalises the {\em GSOS rule systems} of \cite{BIM95} and constitutes the
first formalisation of {\sc Plotkin}'s {\em Structural Operational
Semantics (SOS)} \cite{Pl04} that is sufficiently general to cover
many of its applications.

The following definition (from \cite{vG93d}) tells when a transition is provable from a
TSS\@. It generalises the standard definition (see e.g.\ \cite{GrV92})
by (also) allowing the derivation of transition rules. The
derivation of a transition \plat{$t\ar{a}t'$} corresponds to the derivation
of the transition rule $\frac{H}{t\goto{a}t'}$ with $H\mathbin=\emptyset$.
The case $H \mathbin{\neq} \emptyset$ corresponds to the derivation of
\plat{$t\ar{a}t'$} under the assumptions $H$.

\begin{definition}[\emph{Proof}]\label{df:proof}\rm
Let $P=(\Sigma,R)$ be a TSS. A {\em proof} of a transition
rule $\frac{H}{\alpha}$ from $P$ is a well-founded, upwardly
branching tree of which the nodes are labelled by $\Sigma$-literals,
such that:
\begin{itemize}
\item the root is labelled by $\alpha$, and
\item if $\beta$ is the label of a node $q$ and $K$ is the set of
labels of the nodes directly above $q$, then
\begin{itemize}
\item either $K=\emptyset$ and $\beta \in H$,
\item or $\frac{K}{\beta}$ is a substitution instance of a rule from $R$.
\end{itemize}
\end{itemize}
If a proof of $\frac{H}{\alpha}$ from $P$ exists, then\vspace{2pt}
$\frac{H}{\alpha}$ is {\em provable} from $P$, notation $P \vdash
\frac{H}{\alpha}$.
\end{definition}

\noindent
A TSS is meant to specify an LTS in which the transitions are closed positive literals.
A positive TSS specifies a transition relation in a straightforward
way as the set of all provable transitions.\footnote{Readers interested only in the restriction of
  my results to TSSs without negative premises---giving rise to 2-valued transition relations---can
  safely skip the remainder of this section, and identify $p\hoto{a}p'$ with $p\goto{a}p'$.
  In the proofs of \pr{alpha-conversion bisimilar} and \thm{congruence} also $p\hoto{a}_\lambda p'$
  and $p\goto{a}_\lambda p'$ equal $p\goto{a}p'$, for any $\lambda$; so the induction on $\lambda$
  can be skipped, as well as the auxiliary Claims 3 and 1, and the proof proceeds directly by induction on $\pi$.}
But as pointed out in {\sc Groote} \cite{Gr93}, it is not so easy to associate a
transition relation to a TSS with negative premises.
In \cite{vG04} several solutions to this problem were reviewed and evaluated.
Arguably, the best method to assign a meaning to all TSSs is the \emph{well-founded semantics}
of {\sc Van Gelder, Ross \& Schlipf} \cite{GRS91}, which in general yields a
{\em 3-valued transition relation}
$T: {\sf T}(\Sigma) \times A \times {\sf T}(\Sigma)\rightarrow
\{\mbox{\sl present}, \mbox{\sl undetermined}, \mbox{\sl absent}\}$.
I present such a relation as a pair $\rec{CT,PT}$ of 2-valued
transition relations---the sets of \emph{certain} and \emph{possible transitions}---with $CT \subseteq PT$.
When insisting on 2-valued transition relations, the best method is the same,
declaring meaningful only those TSSs whose well-founded semantics is 2-valued, meaning that $CT=PT$.

Below I give a new presentation of the well-founded semantics, strongly inspired by
previous accounts in \cite{Prz90,BolG96,vG04}. As \df{proof} does not allow the
derivation of negative literals, to arrive at an approximation $AT^+$ of the set of transitions
that are in the transition relation intended by a TSS $P$, one could start from an approximation
$AT^-$ of the closed negative literals that ought to be generated, and define $AT^+$ as the
set of closed positive literals provable from $P$ under the hypotheses $AT^-$.
Intuitively,
\begin{enumerate}
  \item if $AT^-$ is an under- (resp.\ over-)approximation of the closed negative
    literals that ``really'' hold, then $AT^+$ will be an under- (resp.\ over-)approximation
    of the intended (2-valued) transition relation, and
  \item if $AT^+$ is an under- (resp.\ over-)approximation of the intended transition
    relation, then the set of all closed negative literals that do not deny any literal in
    $AT^+$ is an over- (resp.\ under-)approximation of the closed negative literals
    that agree with the intended transition relation.
\end{enumerate}

{\hfuzz 1.6pt

\begin{definition}[\emph{Over- and underappr.\ 
      of transition relations}]
\label{df:well-founded}
Let $P$ be a TSS\@.
For ordinals $\lambda$ the sets $CT^+_\lambda$ and $PT^+_\lambda$
of closed positive literals, and $CT^-_\lambda$, $PT^-_\lambda$
of closed negative literals are defined inductively~by:

\noindent
\hfill
$PT^-_\lambda$ \begin{tabular}{l}~\\is the set of literals\\ that do not
  deny any\\ $\beta\in CT^+_\kappa$ with $\kappa<\lambda$
  \end{tabular}
\hfill
$\beta \in PT^+_\lambda$ iff $P\vdash \frac{PT^-_\lambda}{\beta}$\phantom.
\hfill\mbox{}

\noindent
\hfill
$CT^-_\lambda$ \begin{tabular}{l}~\\is the set of literals\\ that do not
  deny any\\ $\beta\in PT^+_\lambda$
  \end{tabular}
\hfill
$\beta \in CT^+_\lambda$ iff $P\vdash \frac{CT^-_\lambda}{\beta}$.
\hfill\mbox{}
\end{definition}}
Intuitively, $CT^+_\lambda$ is an underapproximation of the set of transitions that should
be in the transition relation specified by $P$, and $PT^+_\lambda$ an overapproximation.
Likewise, $CT^-_\lambda$ is an underapproximation of the set of closed negative literals
that should hold, and $PT^-_\lambda$ an overapproximation.
The approximations get better with increasing $\lambda$.
To understand this inductively, note that $PT^-_0$ is the set of \emph{all}
closed negative literals, and thus surely an overapproximation.
The induction step is given by considerations 1 and 2 above.

\begin{lemma}
$CT^-_\kappa \mathbin\subseteq CT^-_\lambda \mathbin\subseteq PT^-_\lambda \mathbin\subseteq PT^-_\kappa$
and $CT^+_\kappa \mathbin\subseteq CT^+_\lambda \mathbin\subseteq PT^+_\lambda \mathbin\subseteq PT^+_\kappa$
for~$\kappa\mathbin<\lambda$.
\end{lemma}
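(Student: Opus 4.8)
The plan is to reduce the entire statement to two elementary monotonicity facts about the operations from which the four sequences are assembled, and then to treat the ``diagonal'' inclusions $CT^-_\lambda\subseteq PT^-_\lambda$ and $CT^+_\lambda\subseteq PT^+_\lambda$ by a transfinite induction on~$\lambda$. Write $\mathit{prov}(N):=\{\beta\mid P\vdash\frac{N}{\beta}\}$ for the set of closed positive literals provable under a set $N$ of closed negative hypotheses, and, for a set $X$ of closed positive literals, write $\mathcal{N}(X)$ for the set of closed negative literals denying no member of~$X$. The two facts I would record first are: \textbf{(M1)} $\mathit{prov}$ is monotone in its hypotheses --- if $N\subseteq N'$ then any proof of $\frac{N}{\beta}$ from $P$ is also a proof of $\frac{N'}{\beta}$ from $P$, because by \df{proof} a leaf of such a proof that is justified by lying in $N$ also lies in~$N'$; and \textbf{(M2)} $\mathcal{N}$ is antitone, which is immediate from the definition of ``deny''. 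In this notation \df{well-founded} reads $PT^-_\lambda=\mathcal{N}\bigl(\bigcup_{\kappa<\lambda}CT^+_\kappa\bigr)$, $PT^+_\lambda=\mathit{prov}(PT^-_\lambda)$, $CT^-_\lambda=\mathcal{N}(PT^+_\lambda)$ and $CT^+_\lambda=\mathit{prov}(CT^-_\lambda)$.

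The four ``horizontal'' inclusions $PT^-_\lambda\subseteq PT^-_\kappa$, $PT^+_\lambda\subseteq PT^+_\kappa$, $CT^-_\kappa\subseteq CT^-_\lambda$ and $CT^+_\kappa\subseteq CT^+_\lambda$ (for $\kappa\le\lambda$) then need no induction at all: $\bigcup_{\kappa'<\kappa}CT^+_{\kappa'}\subseteq\bigcup_{\kappa'<\lambda}CT^+_{\kappa'}$ merely because the index set on the right is larger, so \textbf{(M2)} gives $PT^-_\lambda\subseteq PT^-_\kappa$, then \textbf{(M1)} gives $PT^+_\lambda\subseteq PT^+_\kappa$, then \textbf{(M2)} gives $CT^-_\kappa\subseteq CT^-_\lambda$, and then \textbf{(M1)} gives $CT^+_\kappa\subseteq CT^+_\lambda$.

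The real work is the pair $CT^-_\lambda\subseteq PT^-_\lambda$ and $CT^+_\lambda\subseteq PT^+_\lambda$, which I would prove simultaneously by transfinite induction on~$\lambda$. By \textbf{(M1)} the second follows from the first, and by \textbf{(M2)} the first follows once one has $\bigcup_{\kappa<\lambda}CT^+_\kappa\subseteq PT^+_\lambda$. This last inclusion is the one genuine obstacle, since it is apparently circular: $PT^+_\lambda$ is built from $PT^-_\lambda$, which is built from exactly the sets $CT^+_\kappa$ with $\kappa<\lambda$ that we are trying to bound. I would break the circle in two stages. First, using the induction hypothesis (the diagonal inclusions at every $\mu<\lambda$) together with the horizontal inclusions above, establish $CT^+_{\kappa'}\subseteq PT^+_\kappa$ for all $\kappa',\kappa<\lambda$: if $\kappa'\le\kappa$ this reads $CT^+_{\kappa'}\subseteq CT^+_\kappa\subseteq PT^+_\kappa$ (horizontal, then hypothesis), and if $\kappa<\kappa'$ it reads $CT^+_{\kappa'}\subseteq PT^+_{\kappa'}\subseteq PT^+_\kappa$ (hypothesis, then horizontal). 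Hence $\bigcup_{\kappa'<\lambda}CT^+_{\kappa'}\subseteq PT^+_\kappa$ for each $\kappa<\lambda$, so $CT^-_\kappa=\mathcal{N}(PT^+_\kappa)\subseteq\mathcal{N}\bigl(\bigcup_{\kappa'<\lambda}CT^+_{\kappa'}\bigr)=PT^-_\lambda$ by \textbf{(M2)}, whence $CT^+_\kappa=\mathit{prov}(CT^-_\kappa)\subseteq\mathit{prov}(PT^-_\lambda)=PT^+_\lambda$ by \textbf{(M1)}. Taking the union over $\kappa<\lambda$ gives $\bigcup_{\kappa<\lambda}CT^+_\kappa\subseteq PT^+_\lambda$, so $CT^-_\lambda=\mathcal{N}(PT^+_\lambda)\subseteq\mathcal{N}\bigl(\bigcup_{\kappa<\lambda}CT^+_\kappa\bigr)=PT^-_\lambda$ by \textbf{(M2)}, and finally $CT^+_\lambda\subseteq PT^+_\lambda$ by \textbf{(M1)}, closing the induction; the base case $\lambda=0$ is subsumed, since then the first stage is vacuous and $PT^-_0$ is the set of all closed negative literals. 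I expect the only delicate point to be keeping the quantifier bookkeeping of this two-stage argument straight; everything else is a mechanical application of \textbf{(M1)} and \textbf{(M2)}.
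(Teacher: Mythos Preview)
Your proof is correct and follows essentially the same approach as the paper's: first derive the horizontal monotonicities $PT^\pm_\lambda\subseteq PT^\pm_\kappa$ and $CT^\pm_\kappa\subseteq CT^\pm_\lambda$ directly from the definitions, then establish the diagonal inclusions $CT^\pm_\lambda\subseteq PT^\pm_\lambda$ by transfinite induction on~$\lambda$. Your presentation is slightly more streamlined in that it handles all ordinals uniformly via the intermediate claim $CT^+_{\kappa'}\subseteq PT^+_\kappa$ for all $\kappa',\kappa<\lambda$, whereas the paper separates the base, successor, and limit cases---but the underlying argument is the same (the paper's limit-ordinal step is exactly your first stage, phrased for the $CT^-/PT^-$ sets).
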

\begin{proof}
  Let $\kappa<\lambda$. The definition of $PT_\lambda^-$ immediately yields $PT^-_\lambda \mathbin\subseteq PT^-_\kappa$.
  From this, applying \df{well-founded}, one obtains $PT^+_\lambda \mathbin\subseteq PT^+_\kappa$,
  $CT^-_\kappa \mathbin\subseteq CT^-_\lambda$ and $CT^+_\kappa \mathbin\subseteq CT^+_\lambda$, respectively.
  The remaining claims follow by induction on $\lambda$.

  As $PT^-_0$ is the universal relation, certainly $CT^-_0 \mathbin\subseteq PT^-_0$,
  so $CT^+_0 \mathbin\subseteq PT^+_0$.

  Let $\lambda$ be a limit ordinal. Then $PT^-_\lambda = \bigcap_{\mu<\lambda} PT^-_\mu$.
  For any $\kappa,\mu<\lambda$ one has $CT^-_\kappa \subseteq PT^-_\mu$ by induction.
  Namely $CT^-_\kappa \subseteq CT^-_\mu \subseteq PT^-_\mu$ if $\kappa \leq \mu <\lambda$,
  and $CT^-_\kappa \subseteq PT^-_\kappa \subseteq PT^-_\mu$ if $\mu \leq \kappa <\lambda$. 
  Hence \plat{$CT^-_\kappa \subseteq \bigcap_{\mu<\lambda} PT^-_\mu = PT^-_\lambda$} for any $\kappa<\lambda$,
  and hence $CT^+_\kappa \subseteq PT^+_\lambda$. With \df{well-founded} this implies 
  $CT^-_\lambda \mathbin\subseteq PT^-_\lambda$ and hence $CT^+_\lambda \mathbin\subseteq PT^+_\lambda$.

  Now let $\lambda=\mu{+}1$. By induction $CT^+_\mu \mathbin\subseteq PT^+_\mu$.
  With \df{well-founded} this implies $CT^-_\mu \mathbin\subseteq PT^-_\lambda$,
  and hence $CT^+_\mu \subseteq PT^+_\lambda$. With \df{well-founded} this implies 
  $CT^-_\lambda \mathbin\subseteq PT^-_\lambda$ and hence $CT^+_\lambda \mathbin\subseteq PT^+_\lambda$.
\qed \end{proof}
Since the closed literals over $\Sigma$ form a proper set, there must be an ordinal $\kappa$ such
that $PT^-_\lambda \mathbin= PT^-_\kappa$ for all $\lambda\mathbin>\kappa$, and hence also $PT^+_\lambda \mathbin= PT^+_\kappa$,
$CT^-_\lambda \mathbin= CT^-_\kappa$ and $CT^+_\lambda \mathbin= CT^+_\kappa$.
\begin{definition}\rm\label{df:closure ordinal}
Such an ordinal $\kappa$ is called \emph{closure ordinal}. Let $PT^-\mathbin{:=}PT^-_\kappa$,
$PT^+\mathbin{:=}PT^+_\kappa$, $CT^-\mathbin{:=}CT^-_\kappa$ and $CT^+\mathbin{:=}CT^+_\kappa$.%
\end{definition}
\begin{remark}\label{union}
$PT^-\mathbin=\bigcap_\lambda PT^-_\lambda$, taking the intersection over all ordinals.
Likewise, $PT^+\mathbin=\bigcap_\lambda PT^+_\lambda$,
$CT^-\mathbin= \bigcup_\lambda CT^-_\lambda$ and $CT^+=\bigcup_\lambda CT^+_\lambda$.
\end{remark}
\begin{remark}\label{determined}
$PT^-$ is the set of literals that do not deny any literal in $CT^+$, and likewise for
$CT^-$ and $PT^+$. Moreover, $CT^- \subseteq PT^-$ and $CT^+ \subseteq PT^+$.
\end{remark}
\begin{definition}[\emph{Well-founded semantics}]\rm\label{df:well-founded semantics}
The 3-valued transition relation $\rec{CT^+\!,PT^+}$ constitutes the well-founded semantics of $P$.
\end{definition}
Below I show that the above account of the well-founded
semantics is consistent with the one in \cite{vG04}, and thereby with the ones in \cite{BolG96,Prz90,GRS91}.

\begin{definition}[\emph{Well-supported proof} \cite{vG04}]\label{df:wsp}\rm
Let $P=(\Sigma,R)$ be a TSS\@. A \emph{well-supported proof} from $P$ of a closed literal ${\alpha}$ is a well-founded tree with the nodes labelled by closed literals, such that the root is labelled by $\alpha$, and if $\beta$ is the label of a node and $K$ is the set of labels of the children of this node, then:
\begin{itemize}
\item either $\beta$ is positive and $\frac{K}{\beta}$ is a substitution instance of a rule in $R$;
\item or $\beta$ is negative and for each set $N$ of closed negative literals with
  \plat{$P \vdash \frac{N}{\gamma}$} for
  $\gamma$ a closed positive literal denying $\beta$, a literal in $K$ denies one in $N$.
\end{itemize}
$P\vdash_{\it ws}\alpha$ denotes that a well-supported proof from $P$ of $\alpha$ exists.
\end{definition}

\begin{proposition}
Let $P$ be a TSS\@. Then $P \vdash_{\it ws}p\ar a q$ iff $(p\ar a q)\in CT^+$, and $P
\vdash_{\it ws}p\nar a$ iff $(p\nar a)\in CT^-$.
\end{proposition}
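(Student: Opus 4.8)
The plan is to prove the two biconditionals by establishing the ``$\Leftarrow$'' and ``$\Rightarrow$'' directions separately. In both I will use that $\vdash$ is monotone in its hypothesis set (a proof of $\frac H\alpha$ from $P$ is also a proof of $\frac{H'}\alpha$ whenever $H\subseteq H'$; cf.\ \df{proof}), that a proof of a closed literal from closed hypotheses may be taken to be closed (apply a closed substitution to the proof tree), that $CT^+=\bigcup_\lambda CT^+_\lambda$ and $CT^-=\bigcup_\lambda CT^-_\lambda$ (Remark~\ref{union}), the fixed-point identities $\beta\in CT^+\Leftrightarrow P\vdash\frac{CT^-}\beta$ and $\beta\in PT^+\Leftrightarrow P\vdash\frac{PT^-}\beta$, and the descriptions of $PT^-$ and $CT^-$ in Remark~\ref{determined}.

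For ``$\Leftarrow$'' I would prove, by induction on the ordinal $\lambda$, the conjunction of (a)~every positive literal in $CT^+_\lambda$ has a well-supported proof from $P$ and (b)~every negative literal in $CT^-_\lambda$ does; the result then follows by Remark~\ref{union}. Within the step for a fixed $\lambda$ I would establish (b) first, from (a) for all $\kappa<\lambda$, and then (a) for $\lambda$, from (b) for $\lambda$. For (b): given $p\nar a\in CT^-_\lambda$ --- equivalently, $P\not\vdash\frac{PT^-_\lambda}{p\ar a q}$ for every $q$ --- build the tree whose root $p\nar a$ has as its children exactly the literals of $K:=\bigcup_{\kappa<\lambda}CT^+_\kappa$, each child carrying a well-supported subproof supplied by (a). To check the condition of \df{wsp} for a negative node at the root, let $N$ be any set of closed negative literals with $P\vdash\frac N\gamma$ for some $\gamma=p\ar a q$ denying $p\nar a$; monotonicity of $\vdash$ together with $P\not\vdash\frac{PT^-_\lambda}\gamma$ forces $N\not\subseteq PT^-_\lambda$, so some $\delta\in N$ denies a literal $\beta'\in CT^+_\kappa$ with $\kappa<\lambda$, and then $\beta'\in K$ denies $\delta\in N$, as required. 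For (a): given $\beta\in CT^+_\lambda$, take a closed proof of $\beta$ from $P$ under hypotheses $CT^-_\lambda$; each of its nodes is either the conclusion of a rule instance or else a hypothesis leaf labelled by a negative literal of $CT^-_\lambda$, so replacing every such leaf by a well-supported subproof (available by (b) for $\lambda$) turns the tree into a well-supported proof of $\beta$. Grafting well-founded trees onto leaves preserves well-foundedness.

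For ``$\Rightarrow$'' I would use well-founded induction on a well-supported proof of $\alpha$, showing $\alpha$ positive $\Rightarrow\alpha\in CT^+$ and $\alpha$ negative $\Rightarrow\alpha\in CT^-$. If $\alpha$ is positive, its children form the premise set $K$ of a rule instance $\frac K\alpha$; by the induction hypothesis every positive $\beta\in K$ lies in $CT^+$, hence $P\vdash\frac{CT^-}\beta$, and every negative $\beta\in K$ lies in $CT^-$, so grafting the proofs of the positive premises under $\frac K\alpha$ exhibits $P\vdash\frac{CT^-}\alpha$, i.e.\ $\alpha\in CT^+$. If $\alpha=p\nar a$ is negative, suppose towards a contradiction that $p\ar a q\in PT^+$ for some $q$; then $P\vdash\frac{PT^-}{p\ar a q}$, and letting $N\subseteq PT^-$ be the hypotheses actually used, the condition of \df{wsp} for the negative node at the root yields a child $\gamma$ of the root denying some $\delta\in N$; as $\delta$ is negative, $\gamma$ is positive, so $\gamma\in CT^+$ by the induction hypothesis, contradicting $\delta\in PT^-$ via Remark~\ref{determined}. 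Hence $p\nar a$ denies no literal in $PT^+$, i.e.\ $p\nar a\in CT^-$.

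The step I expect to be the main obstacle is the ``$\Leftarrow$'' direction for negative literals: the set $K$ of children of the root of the well-supported proof must consist only of literals already known (by the induction hypothesis) to be well-supportedly provable, yet be large enough to meet the universally quantified denial condition of \df{wsp} against \emph{every} rebutting hypothesis set $N$. Making these two requirements compatible is what forces the interleaving of the two sub-inductions --- proving (b) before (a) at each $\lambda$ --- and the essential use of monotonicity of $\vdash$ in its hypotheses; the positive cases and the whole ``$\Rightarrow$'' direction amount to comparatively routine grafting of proof trees.
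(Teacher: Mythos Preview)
Your proposal is correct and follows essentially the same approach as the paper: transfinite induction on $\lambda$ for ``$\Leftarrow$'' (handling the negative case first, using the positive case for smaller ordinals, then the positive case from the negative one at the same level), and structural induction on the well-supported proof for ``$\Rightarrow$''. Your treatment is in fact slightly more explicit than the paper's---you spell out the child set $K=\bigcup_{\kappa<\lambda}CT^+_\kappa$ in the negative ``$\Leftarrow$'' case, whereas the paper leaves this implicit---but the underlying argument is identical.
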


\begin{proof}
\framebox{$\Rightarrow$}: Let $\pi$ be a well-supported proof of a closed literal
$\alpha$. By consistently applying the same closed substitution to all literals occurring
in $\pi$, one can assume, without loss of generality, that all literals in $\pi$ are closed.
With structural induction on $\pi$ I show that $\alpha \in CT^+ \cup CT^-$.

Suppose $\alpha$ is positive and $\frac{K}{\alpha}$ is the closed substitution instance of
the rule of $P$ applied at the root of $\pi$. Then for each $\beta\in K$ the literal
$\beta$ is ws-provable from $P$ by means of a strict subproof of $\pi$. By induction 
$\beta \in CT^+ \cup CT^-$. As $CT^+$ is $CT^+_\kappa$ for some ordinal $\kappa$, it is
closed under deduction. Hence $\alpha \in CT^+$.

Suppose $\alpha$ is negative. Let $\beta$ be closed positive literal denying $\alpha$.
By \df{wsp}, each set $N$ of closed negative literals with
$P\vdash\beta$ contains a literal $\gamma_N$ denying a literal $\delta_N$ that is ws-provable
from $P$ by means of a strict subproof of $\pi$. By induction $\delta_N \mathbin\in CT^+$.
Hence $\gamma_N \mathbin{\notin} PT^-$. Consequently $\beta \mathbin{\notin} PT^+$. Hence $\alpha \mathbin\in CT^-$.

\framebox{$\Leftarrow$}: Suppose $\alpha \in CT_\lambda^+ \cup CT^-_\lambda$.
With induction on $\lambda$ I show that $P \vdash_{\it ws}\alpha$.
First suppose $\alpha \in CT_\lambda^-$. Let $N$ be a set of closed negative literals with
\plat{$P \vdash \frac{N}{\gamma}$} for $\gamma$ a closed positive literal denying $\alpha$.
Assume that $N \subseteq PT^-_\lambda$. Then $\gamma$ would be in $PT^+_\lambda$,
contradicting the definition of $CT^-_\lambda$. So $N$ contains a literal that is not in
$PT^-_\lambda$, i.e., denies a literal $\delta_N$ in $CT^+_\kappa$ for some $\kappa < \lambda$.
By induction, $P \vdash_{\it ws} \delta_N$. It follows that $P \vdash_{\it ws}\alpha$.

Now suppose $\alpha \in CT_\lambda^+$. Then \plat{$P \vdash \frac{CT^-_\lambda}{\alpha}$}.
By the case above $P \vdash_{\it ws}\beta$ for each $\beta\in CT^-_\lambda$.
Hence $P \vdash_{\it ws}\alpha$.
\qed
\end{proof}
The above result, together with Theorem 1 in \cite{vG04}, and the observation in
\cite{vG04} that literals \plat{$t \mathrel\nar{a} t'$} can be eliminated from consideration
(as done here), implies that the well-founded semantics given above agrees with the
one from~\cite{vG04}.

In \cite{vG04} it was shown that $\vdash_{\it ws}$ is consistent, in the sense that no TSS
admits well-supported proofs of two literals that deny each other. This also follows
directly from the material above. 
A TSS $P$ is called \emph{complete} \cite{vG04} if for each $p$ and $a$, either
\plat{$P\vdash_{\it ws}p\nar a$} or \plat{$P\vdash_{\it ws}p\ar a q$} for some $q$.
This implies that $CT^-$ is exactly the set of closed negative literals that do not deny any
literal in $CT^+$. Hence $CT^-=PT^-$ and thus $CT^+=PT^+$. So the 3-valued transition
system associated to a complete TSS is 2-valued.

Below I write $P \vdash p \goto{a}_\lambda q$ for $(p \mathbin{\goto{a}} q)\mathbin\in CT^+_\lambda$,
$P \vdash p \gonotto{a}_\lambda$ for $(p \gonotto{a})\mathbin\in CT^-_\lambda$,
$P \vdash p \hoto{a}_\lambda q$ for $(p \goto{a} q)\in PT^+_\lambda$ and
$P \vdash p \honotto{a}_\lambda$ for $(p \gonotto{a})\in PT^-_\lambda$.
Moreover, $p \goto{a} q$, resp.\ $p \hoto{a} q$, will abbreviate $p \goto{a}_\kappa q$,
resp.\ $p \hoto{a}_\kappa q$, where $\kappa$ is the closure ordinal of \df{closure ordinal}.

In my forthcoming lean congruence proof I will apply structural induction on ``the proof of a transition
$p \goto{a}_\lambda q$ or $p \hoto{a}_\lambda q$ from $P$''.
There I will mean the proofs of $\frac{CT^-_\lambda}{p \goto{a} q}$\vspace{-4pt} and
$\frac{PT^-_\lambda}{p \goto{a} q}$, respectively, as this is what constitutes the evidence for the
statement \plat{$P \vdash p \ar{a}_\lambda q$}, resp.\ \plat{$P \vdash p \hoto{a}_\lambda q$}.

\section{The bisimulation preorder}

The goal of this paper is to show that bisimilarity is a congruence for recursion for all languages
with a structural operational semantics in the ntyft/ntyxt format.
Traditionally \cite{Mi90ccs}, bisimilarity is defined on 2-valued transition systems only, whereas
the structural operational semantics of a language specified by a TSS
can be 3-valued. Rather than limit my results to languages specified
by complete TSSs, I use an extension of the notion of bisimilarity to 3-valued transition systems.
Such an extension, called \emph{modal refinement}, is provided in \cite{LT88}.
There, 3-valued transition systems are called \emph{modal transition systems}.

\begin{definition}[\emph{Bisimilarity}]\rm\label{df:refinement}
Let $P$ be a TSS\@.
A \emph{bisimulation} $\R$ is a binary relation on the states of $\T(\Sigma)$ such that, for
$p,q\in\T(\Sigma)$ and $a\in A$,
\begin{itemize}
\item if $p\R q$ and $\PP p \goto{a} p'$, then there is a $q'$ with $\PP q \goto{a} q'$ and $p' \R q'$,
\item if $p\R q$ and $\PP q \hoto{a} q'$, then there is a $p'$ with $\PP p \hoto{a} p'$ and $p' \R q'$.
\end{itemize}
A process $q\mathbin\in \T(\Sigma)$ is a \emph{modal refinement} of $p\mathbin\in \T(\Sigma)$, notation
$p \sqsubseteq_B q$, if there exists a bisimulation $\R$ with $p \R q$.
I call $\sqsubseteq_B$ the \emph{bisimulation preorder}, or \emph{bisimilarity}.
The kernel of $\sqsubseteq$, given by ${\equiv_B} := {\sqsubseteq_B} \cap {\sqsupseteq_B}$, is \emph{bisimulation equivalence}.
\end{definition}
Clearly, modal refinement is reflexive and transitive, and hence a preorder.
The underlying idea is that a process $p$ with a 3-valued transition relation $\rec{CT,PT}$ is a \emph{specification} of a
process with a 2-valued transition relation, in which the presence or absence of certain transitions is left open.
$CT$ contains the transitions that are \emph{required} by the specification, and $PT$ the ones that are \emph{allowed}.
If $p \sqsubseteq_B q$, then $q$ may be closer to the eventual implementation, in the sense that
some of the undetermined transitions have been resolved to present or absent.
The requirements of \df{refinement} now say that any transition that is required by $p$ should be
(matched by a transition) required by $q$, whereas any transition allowed by $q$, should certainly
be (matched by a transition) allowed by $p$.

In case $p$ and $q$ are 2-valued (i.e.~\emph{implementations}) the modal refinement relation is
just the traditional notion of bisimilarity \cite{Mi90ccs} (and thus symmetric).

While achieving a higher degree of generality of my lean congruence theorem by interpreting
incomplete TSSs as modal transition systems, I do not propose incomplete TSSs as a tool for
the specification of modal transition systems.

\section{Congruence properties}

In the presence of recursion, two sensible notions of precongruence come to mind.
Let $\sqsubseteq$ be a preorder on the set $\T(\Sigma)$ of closed terms over 
$\Sigma$.
For $\rho,\nu\!:\!\Var\rightarrow\T(\Sigma)$ closed substitutions write $\rho\mathbin\sqsubseteq \nu$ iff
$\rho(x) \mathbin\sqsubseteq \nu(x)$ for each $x \mathbin\in \Var$.

\begin{definition}[\emph{Lean precongruence}]\rm\label{df:lean congruence}
A preorder ${\sqsubseteq} \subseteq \T(\Sigma)\linebreak\times\T(\Sigma)$ is a \emph{lean precongruence} iff $t[\rho] \sqsubseteq t[\nu]$ for any term
$t\in\IT(\Sigma)$ and any closed substitutions $\rho$ and $\nu$ with $\rho \sqsubseteq \nu$.
\end{definition}

\begin{definition}[\emph{Full precongruence}]\rm\label{df:full congruence}
A preorder ${\sqsubseteq} \mathbin\subseteq \T(\Sigma)\times\T(\Sigma)$ is a \emph{full precongruence} iff
it satisfies\vspace{-1ex}%
\begin{equation}\label{comp-operators-closed}
\begin{array}{l}p_i\sqsubseteq q_i ~\mbox{for all}~i=1,...,n \\~~\Rightarrow~~
 f(p_1,...,p_n) \sqsubseteq f(q_1,...,q_n)\end{array}
\vspace{-2ex}
\end{equation}
\begin{equation}\label{comp-recursion-closed}
\begin{array}{r}
S_Y[\sigma]\sqsubseteq S_Y'[\sigma] ~\mbox{for all}~Y\mathbin\in W~\mbox{and}~ \sigma\!:W \rightarrow \T(\Sigma)
\\\Rightarrow~~ \rec{X|S} \sqsubseteq \rec{X|S'}\end{array}
\end{equation}
for all functions $(f,n)\in\Sigma$, closed terms $p_i,q_i\in \T(\Sigma)$, and
recursive specifications $S,S':W \rightarrow \IT(\Sigma,W)$ with $X \in W \subseteq \Var$.
\end{definition}
A lean (resp.\ full) precongruence that is symmetric (i.e.\ an equivalence relation) is called a
\emph{lean} (resp.\ \emph{full}) \emph{congruence}.
Clearly, each full (pre)congruence is also a lean (pre)congruence, and each lean (pre)congruence satisfies
(\ref{comp-operators-closed}) above. Both implications are strict, as the following examples illustrate.

\begin{example}
Consider the TSS given by the rules\vspace{-1ex}
\[ a.x \goto{a} x \qquad \frac{x \goto{a} x'}{x\|y \goto{a} x'\|y} \qquad \frac{y \goto{a} y'}{x\|y \goto{a} x\|y'}\]
where $a$ ranges over $A$, and the recursion rule from \df{format} below.
An \emph{infinite trace} of a process $p$ is a sequence $a_1a_2\dots \in A^\omega$
such that there are processes $p_1,p_2,\dots$ with $p \goto{a_1} p_1 \goto{a_2} p_2 \goto{a_3} \dots$.
Let $p \sqsubseteq q$ iff for each infinite trace $\sigma$ of $p$ there is an infinite trace of $q$
that has a suffix in common with $\sigma$. This is a preorder indeed. It is not hard to check that
$\sqsubseteq$ is a precongruence for both action prefixing $a.\_\!\!\_\,$ and parallel composition
$\_\!\!\_ \| \_\!\!\_\,$,
in the sense that (\ref{comp-operators-closed}) holds. However, it fails to be a lean congruence,
because $a.\rec{X| X{=}c.X} \equiv b.\rec{X| X{=}c.X}$, yet when filled in for $Y$ in
$\rec{Z|Z{=}Y\|Z}$ (which can be seen as $!Y$, an infinite parallel composition of copies of $Y$) the two are
no longer equivalent.
\end{example}
I did not find a pair of a TSS and a preorder known from the literature showing the same.
This suggests that most common preorders that are (pre)congruences for a selection of common operators
are also lean (pre)congruences for recursion.

\begin{example}
Consider the TSS with a constant ${\bf 0}$ and action prefixing, and
only the rules for recursion from \df{format} and $a.x \goto{a} x$ for $a \in A$, with $\tau\in A$ the
\emph{internal action}.  Consider any semantic equivalence $\sim$ satisfying $x\sim\tau.x$, and such that
\emph{divergence} $\rec{X|X{=}\tau.X}$ differs from \emph{deadlock} or \emph{inaction} ${\bf 0}$.
Such semantic equivalences are abound in the literature and include the \emph{failures} semantics
of CSP \cite{BHR84,vG93} and \emph{branching bisimilarity with explicit divergence} \cite{GLT09b,vG93}.
They are all lean congruences (at least when no other operators are present).
Yet, since $0 \sim \rec{X|X{=}X} \not\sim \rec{X|X{=}\tau.X}$, they fail to be full congruences.
\end{example}
A lean congruence is required for treating processes as equivalence classes of closed terms
rather than as the closed terms themselves, in such a way that each term
$t\mathbin\in\IT(\Sigma,W)$ with free variables drawn from the set $W$ models a
$W$-ary operator on such processes. As explained in the introduction,
this notion of congruence facilitates a formal comparison of the expressive power of system
description languages~\cite{vG12}. However, it does not allow equivalence preserving modifications
of recursive specifications themselves, as contemplated in the introduction.
That requires a full congruence.

\section{The pure ntyxt/ntyft format with recursion}

\begin{definition}[\emph{ntytt, ntyft, ntyxt, nxytt rules}]\label{def:ntytt}\rm
An \emph{ntytt rule} is a rule in which the right-hand sides of positive premises are variables that are all distinct, and that do not occur in the source. An ntytt rule is an \emph{ntyxt rule} if its source is a variable, an \emph{ntyft rule} if its source contains exactly one function symbol and no multiple occurrences of variables, and an \emph{nxytt rule} if the left-hand sides of its premises are variables. 
\end{definition} 

\noindent
The idea behind the names of the rules is that the `n' in front refers
to the presence of negative premises, and the following four letters
refer to the allowed forms of left- and right-hand sides of
premises and of the conclusion, respectively.
For example, ntyft means a rule with negative premises (n), where
left-hand sides of premises are general terms (t), right-hand sides
of positive premises are variables (y), the source contains exactly
one function symbol (f), and the target is a general term (t).

\begin{definition}\label{df:format}\rm
A TSS is in the \emph{ntyft/ntyxt format with recursion} if for every recursive
specification $S$ and $X \in V_S$ it has a rule\vspace{-1ex}
$$\displaystyle\frac{\rec{S_X|S} \ar{a} z}{\rec{X|S}\ar{a}z}$$
and all of its other rules are ntyft or ntyxt rules.
\end{definition}

\begin{definition}[\emph{Well-founded and pure rules; distance}]\rm\label{def:lookahead}
\hfill The \emph{dependency graph} of an ntytt rule with \plat{$\{t_i\ar{a_i}y_i\mid i\mathbin\in I\}$} as set of
positive premises is the directed graph with edges
$\{\langle x,y_i\rangle\mid x\mathbin\in\var(t_i)\mbox{ for some }i\mathbin\in I\}$.
A ntytt rule is \emph{well-founded} if each backward chain of edges in its dependency graph is finite.
A variable in a rule is \emph{free} if it occurs neither in the source nor in the right-hand sides
of the premises of this rule. A rule is \emph{pure} if it is well-founded and does not contain free variables.
A TSS is \emph{well-founded}, resp.\ \emph{pure}, if all of its rules are.

Let $r\mathbin=\frac{H}{t\goto{a}u}$ be a pure ntytt rule.\vspace{1pt}
The \emph{distance} of a variable $y\in\var(r)$ to the source of $r$ is the ordinal number given by
\begin{tabbing}
       \= $\textit{dist}(x)=0$ \hspace{4.5cm} \= if $x\in\var(t)$,\\
       \> $\textit{dist}(y)=1+\sup(\{\textit{dist}(x) \mid x\mathbin\in\var(t)\})$ \> if $(t\ar{a}y)\in H$.
\end{tabbing}
\end{definition}
{\sc Bol \& Groote} show that bisimilarity is a congruence for any language
specified by a complete TSS in the well-founded ntyft/ntyxt format (without recursion) \cite{BolG96}.
This generalises a result by {\sc Groote} \cite{Gr93}, showing the same for stratified TSSs in the
well-founded ntyft/ntyxt format; here \emph{stratified} is a more restrictive criterion than
completeness, guaranteeing that a TSS has a well-defined meaning as a 2-valued transition relation.
That result, in turn, generalises the congruence formats of {\sc Groote \& Vaandrager} \cite{GrV92}
for the well-founded tyft/tyxt format (obtained by leaving out negative premises) and for the GSOS
format of {\sc Bloom, Istrail \& Meyer} \cite{BIM95}. Both of these generalise the De Simone format \cite{dS84,dS85}.

{\sc Fokkink and van Glabbeek} show that for any complete TSS in tyft/tyxt (resp.\ ntyft/ntyxt) format there exists a
pure (and thus well-founded) complete TSS in tyft (resp.\ ntyft) format that generates the same
transition relation \cite{FG96}.
From this it follows that the restriction to well-founded TSSs can be dropped from the congruence
formats of \cite{BolG96} and \cite{GrV92}.
The result of \cite{FG96} generalises straightforwardly to incomplete TSSs, and to formats with recursion.

\begin{theorem}\label{thm:ntree}
For each TSS in the tyft/tyxt (resp.\ ntyft/ntyxt) format with recursion there exists a pure TSS in
the tyft (resp.\ ntyft) format with recursion, generating the same (3-valued) transition relation.
\end{theorem}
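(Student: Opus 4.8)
The plan is to reuse the construction of {\sc Fokkink \& van Glabbeek} \cite{FG96}, which turns any (complete) TSS in ntyft/ntyxt format into a pure TSS in ntyft format with the same transition relation, and to check that it survives the two generalisations needed here: to the 3-valued setting of \df{well-founded}, and to the presence of the recursion construct. That construction proceeds in three stages, each leaving the signature $\Sigma$ intact: (i) every ntyxt rule $\frac{H}{x \ar{a} t}$ is replaced by the ntyft rules $\frac{H[f(\vec{y})/x]}{f(\vec{y}) \ar{a} t[f(\vec{y})/x]}$, one for each $(f,n) \in \Sigma$, with $\vec{y} = y_1,\dots,y_n$ fresh and distinct; (ii) free variables are removed from the rules; and (iii) lookahead is reduced until every rule is well-founded, hence --- together with (ii) --- pure. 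First I would observe that, by \df{format}, the recursion rules of a TSS in the format with recursion are fixed once its signature is; hence the recursion rules of the given TSS $P$ and those of the transformed TSS $P'$ coincide. As those recursion rules are themselves pure ntytt rules, and the remaining rules of $P'$ are pure ntyft by (i)--(iii), $P'$ is a pure TSS in the ntyft format with recursion; the tyft case is identical.

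The generalisation to 3-valued transition relations is routine. What the soundness and completeness arguments of \cite{FG96} actually establish is that each of the three transformations, sending a TSS $P$ to a new TSS $P'$, preserves \emph{relative} provability: for every set $N$ of closed negative literals and every closed positive literal $\alpha$ one has $P \vdash \frac{N}{\alpha}$ iff $P' \vdash \frac{N}{\alpha}$. Indeed, a proof from $P$ of a positive literal that uses closed negative literals only as undischarged leaves is precisely a proof of $\frac{N}{\alpha}$ with $N$ the set of those leaves, and the proof-tree manipulations of \cite{FG96} are uniform in the hypotheses $N$. Granting this, \df{well-founded} and an induction on $\lambda$ give $PT^-_\lambda(P) = PT^-_\lambda(P')$, hence $PT^+_\lambda(P) = PT^+_\lambda(P')$, hence $CT^-_\lambda(P) = CT^-_\lambda(P')$ and finally $CT^+_\lambda(P) = CT^+_\lambda(P')$, for every ordinal $\lambda$; so $P$ and $P'$ have the same well-founded semantics.

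The genuinely new ingredient --- and the step I expect to be the main obstacle --- is that once recursion is present the source variable of an ntyxt rule may be instantiated by a recursion term $\rec{X|S}$, which is not operator-headed and is therefore matched by none of the ntyft rules created in stage (i). Here one must appeal to the recursion rule $\frac{\rec{S_X|S} \ar{a} z}{\rec{X|S} \ar{a} z}$: a transition of $\rec{X|S}$ that in $P$ arises by applying the ntyxt rule directly at the source $\rec{X|S}$ has to be re-derived in $P'$ by unfolding, relating $\rec{X|S}$ to $\rec{S_X|S}$ and iterating until the source of the relevant rule application becomes operator-headed, at which point the stage-(i) rules take over. Turning this into a clean induction --- in particular, dealing with recursive specifications whose iterated unfolding never becomes operator-headed --- is where the real effort lies; the remainder is a faithful transcription of the arguments of \cite{FG96} into the present framework.
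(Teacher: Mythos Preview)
Your overall strategy --- reuse the \cite{FG96} construction and check it survives both generalisations --- is the paper's own, and your treatment of the 3-valued extension is correct and in fact more explicit than the paper, which simply invokes the strengthening of \cite[Prop.~29]{vG95} appearing in \cite{vG04}.

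The gap is in your handling of stage~(i) at recursion-headed sources. Re-derivation by unfolding does not work, and not only because unfolding may diverge. Even when $\rec{S_X|S}$ happens to be operator-headed, applying the stage-(i) ntyft rule there yields conclusion $\rec{S_X|S}\ar{a}t[\rec{S_X|S}/x,\ldots]$ from premises $H[\rec{S_X|S}/x,\ldots]$, whereas the transition you need to recover is $\rec{X|S}\ar{a}t[\rec{X|S}/x,\ldots]$, obtained in $P$ from $H[\rec{X|S}/x,\ldots]$; since $\rec{X|S}$ and $\rec{S_X|S}$ are distinct terms (merely bisimilar), neither the target nor the required premises line up. Concretely, with the single tyxt axiom $x\ar{a}x$, the original TSS proves $\rec{X|\{X{=}X\}}\ar{a}\rec{X|\{X{=}X\}}$ in one step, but in any TSS whose only rules with a recursion-term source are the standard ones of \df{format} the sole rule applicable at $\rec{X|\{X{=}X\}}$ is the recursion rule, which regresses to itself --- so this term has no derivable transitions at all. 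Hence stage~(i) performed only for the function symbols of $\Sigma$ strictly loses transitions. The generalisation that lets the \cite{FG96} argument go through verbatim is to carry out stage~(i) for the recursion construct as well: for each ntyxt rule $\frac{H}{x\ar{a}t}$ and each pair $(X,S)$ with $X\in V_S$, add $\frac{H[\rec{X|S}/x]}{\rec{X|S}\ar{a}t[\rec{X|S}/x]}$, with the free variables of $\rec{X|S}$ playing the role of the source variables. Every closed term is then again headed by exactly one ``constructor'' and no unfolding argument is needed.
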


\begin{proof}
  \cite[Theorem~5.4]{FG96} shows that for each TSS $P$ in ntyft/ntyxt format there exists a TSS $P'$ in
  pure ntyft format, such that for any closed transition rule $\frac{N}{\alpha}$ with only negative
  premises, one has $P \vdash \frac{N}{\alpha} \Leftrightarrow P' \vdash \frac{N}{\alpha}$.
  This result generalises seamlessly to TSS in the ntyft/ntyxt format with recursion;
  I leave it to the reader to check that recursion causes no new complications in the proof.

  \cite{FG96} obtains the quoted result for complete TSSs from Thm.~5.4 by means of an
  application of \cite[Prop.~5.2]{FG96}, which says that if $P$ and $P'$ are TSSs such that
  $P \vdash \frac{N}{\alpha} \Leftrightarrow P' \vdash \frac{N}{\alpha}$
  for any closed transition rule $\frac{N}{\alpha}$ with only negative premises,
  then $P$ is complete iff $P'$ is, and in that case they determine the same transition relation.
  This Prop.~5.2 was taken verbatim from \cite[Prop.~29]{vG95}.

  In \cite{vG04}, the journal version of \cite{vG95}, Prop.~29 was extended to also conclude, under
  the same assumption, that $P$ and $P'$ determine the same 3-valued transition relation according
  to the well-founded semantics. Using this version of Prop.~29 instead of Prop.~5.2 yields the
  required result.
\qed
\end{proof}

The next two propositions (not used in the rest of the paper) tell that any language specified by
TSS in the ntyft/ntyxt format with recursion satisfies two sanity requirements from \cite{vG94a}.
The first is that, up to $\equiv_B$, the meaning of a closed term $\rec{X|S}$ is the $X$-component of a solution of $S$:
\begin{proposition}\label{pr:solution bisimilar}
Let $P\mathbin=(\Sigma,R)$ be a TSS in the ntyft/ntyxt format with recursion and $S$
a recursive specification with $X\mathbin\in V_S$. Then $\rec{X|S} \equiv_B \rec{S_X|S}$.
\end{proposition}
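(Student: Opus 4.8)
The goal is to show $\rec{X|S} \equiv_B \rec{S_X|S}$, i.e.\ that the two are related in both directions by bisimulations. The key structural fact is that $\rec{S_X|S}$ is, by \df{substitutions}, precisely $S_X[\rec{Y|S}/Y]_{Y\in V_S}$: it is the body $S_X$ with each bound variable replaced by the corresponding $\rec{\cdot|S}$ term. The plan is to exhibit an explicit relation $\R$ on $\T(\Sigma)$ containing the pair $(\rec{X|S},\rec{S_X|S})$ (and its converse) and show it is a bisimulation in the sense of \df{refinement} — checking both the ``required transition'' clause (for $CT^+$, i.e.\ $\goto{a}$) and the ``allowed transition'' clause (for $PT^+$, i.e.\ $\hoto{a}$).

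First I would take $\R$ to be $\{(\rec{X|S},\rec{S_X|S}) \mid X\in V_S\} \cup \{(u,u)\mid u\in\T(\Sigma)\}$, together with its converse, so that $\R$ is symmetric and contains the identity. The transitions of $\rec{X|S}$ are governed entirely by the dedicated recursion rule of \df{format}, namely $\frac{\rec{S_X|S}\ar{a}z}{\rec{X|S}\ar{a}z}$; there are no ntyft/ntyxt rules with a $\rec{\cdot|\cdot}$-headed source. Hence for the certain transitions: if $P\vdash \rec{X|S}\goto{a}p'$, then this provable literal must have been obtained by an application of the recursion rule, whose unique premise is $\rec{S_X|S}\goto{a}p'$; so $P\vdash\rec{S_X|S}\goto{a}p'$, and $(p',p')\in\R$ matches it. Conversely, from $P\vdash\rec{S_X|S}\goto{a}p'$ the recursion rule immediately gives $P\vdash\rec{X|S}\goto{a}p'$, matched by $(p',p')\in\R$. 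The clause for $\hoto{a}$ (possible transitions, $PT^+$) is identical, since the recursion rule is an ordinary positive rule and so contributes to $PT^+_\lambda$ exactly as it does to $CT^+_\lambda$ at each stage — one reasons the same way using $P\vdash\frac{PT^-_\lambda}{\cdot}$ in place of $P\vdash\frac{CT^-_\lambda}{\cdot}$. For pairs $(u,u)$ the bisimulation conditions are trivial. This establishes that $\R$ is a bisimulation, hence $\rec{X|S}\sqsubseteq_B\rec{S_X|S}$ and $\rec{S_X|S}\sqsubseteq_B\rec{X|S}$, so $\rec{X|S}\equiv_B\rec{S_X|S}$.

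The main subtlety — and the step I would be most careful about — is the claim that \emph{every} provable transition of $\rec{X|S}$ arises from the recursion rule and nothing else. This needs the observation that in a TSS in the ntyft/ntyxt format with recursion, the only rule whose source can be unified with a closed term of the form $\rec{X|S}$ is the designated recursion rule for that $S$ and that $X$: an ntyft rule has a source $f(\vec{x})$ with a genuine function symbol $f$, and $\rec{\cdot|\cdot}$ is not such an $f$; an ntyxt rule has a variable source, but a substitution instance of a variable source produces a transition $\rec{X|S}\goto{a}q$ only via a closed substitution sending that variable to $\rec{X|S}$, and then the premises become premises about $\rec{X|S}$ itself, so this ultimately reduces (by well-foundedness of the proof tree, or by working with the pure format of \thm{ntree}) to the recursion rule again — alternatively one may first invoke \thm{ntree} to replace $P$ by an equivalent pure TSS in the ntyft format with recursion, eliminating ntyxt rules altogether and making the case analysis immediate. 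I would also double-check the inductive bookkeeping over the approximation ordinal $\lambda$ implicit in ``$P\vdash p\goto{a}_\lambda q$'': since the recursion rule has a single positive premise and no negative premises, a proof of $\frac{CT^-_\lambda}{\rec{X|S}\goto{a}p'}$ has as its immediate subproof a proof of $\frac{CT^-_\lambda}{\rec{S_X|S}\goto{a}p'}$ at the \emph{same} $\lambda$, so no ordinal induction is actually needed here — the correspondence is level-by-level exact.
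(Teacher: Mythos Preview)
Your proposal is correct and takes essentially the same approach as the paper: both rest on the observation that, via the recursion rule, $\rec{X|S}$ and $\rec{S_X|S}$ have exactly the same outgoing transitions (with the same targets), so the identity relation augmented by the pair $(\rec{X|S},\rec{S_X|S})$ is a bisimulation. The paper compresses this into a single ``iff'' line; you spell out the bisimulation explicitly and, unlike the paper, explicitly treat the $\hoto{a}$ clause and the ntyxt complication.

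One small wobble: your first explanation of the ntyxt case (``the premises become premises about $\rec{X|S}$ itself'') is not right in general---the premises of an ntyxt rule need not mention the source variable at all, and more importantly the \emph{target} may mention it, so the literal iff can fail (e.g.\ with the rule $x\goto{a}x$ one gets $\rec{X|S}\goto{a}\rec{X|S}$ but not $\rec{S_X|S}\goto{a}\rec{X|S}$). Your second alternative, invoking \thm{ntree} to pass to a pure ntyft TSS with the same transition relation, is the clean fix and is what the paper tacitly relies on as well.
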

\begin{proof}
$P \vdash \rec{X|S} \ar{a} q$ for some $a\in A$ and $q\in\T(\Sigma)$ iff $P \vdash \rec{S_X|S}\ar{a}q$.
\qed
\end{proof}
For the second, \emph{invariance under $\alpha$-conversion},
write $t \eqa u$ if the terms $t,u\in\IT(\Sigma)$ differ only in the
names of their bound variables (the variables from $V_S$ within a
subexpression of the form $\rec{X|S}$).
\begin{proposition}\label{pr:alpha-conversion bisimilar}
Let $P=(\Sigma,R)$ be a TSS in the ntyft/ntyxt format with recursion.
Then $p \eqa q \Rightarrow p \equiv_B q$ for all $p,q\in\T(\Sigma)$.
\end{proposition}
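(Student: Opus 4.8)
The plan is to show that the relation ${\R} := {\eqa} = \{(p,q) \mid p \eqa q\}$ on $\T(\Sigma)$ is a bisimulation in the sense of \df{refinement}, which immediately yields $p \eqa q \Rightarrow p \equiv_B q$ since $\eqa$ is symmetric. Because the transition relations $\goto a$ and $\hoto a$ are defined via the well-founded semantics as $CT^+_\kappa$ and $PT^+_\kappa$, the natural route is to establish, by induction on the ordinal $\lambda$, the stronger statement that $\eqa$ is preserved by $\goto a_\lambda$ and $\hoto a_\lambda$ for every $\lambda$; that is, if $p \eqa q$ and $P \vdash p \goto a_\lambda p'$, then there is $q'$ with $P \vdash q \goto a_\lambda q'$ and $p' \eqa q'$, and symmetrically for $\hoto a_\lambda$. (As the footnote accompanying \df{proof} notes, for positive TSSs the ordinal layering collapses and one can argue directly.) Within each ordinal stage the argument proceeds by structural induction on the proof $\pi$ of $\frac{CT^-_\lambda}{p \goto a p'}$, respectively $\frac{PT^-_\lambda}{p \goto a p'}$.

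The key technical fact underpinning the induction is that $\alpha$-conversion commutes with substitution and with the proof machinery: if $t \eqa t'$ and $\sigma \eqa \sigma'$ pointwise (suitably interpreted), then $t[\sigma] \eqa t'[\sigma']$; and a closed substitution instance of an ntyft/ntyxt rule used in a proof of a transition from $p$ can be transported along an $\alpha$-conversion $p \eqa q$ to a closed substitution instance of the same rule witnessing a matching transition from $q$. For a transition whose last proof step uses an ordinary ntyft or ntyxt rule $\frac{H}{\alpha}$, the source of that rule unifies with $p$ via some closed substitution $\zeta$; since $p \eqa q$, one builds a closed substitution $\zeta'$ with $\zeta(x) \eqa \zeta'(x)$ for all relevant $x$ so that the rule's source maps to $q$, applies the induction hypothesis to each (positive) premise (which is proved by a strictly smaller subproof, at ordinal $\leq \lambda$), checks that the negative premises still hold because $CT^-_\lambda$ and $PT^-_\lambda$ are closed under $\eqa$ (this itself is part of the ordinal induction), and concludes with $p' = u[\zeta] \eqa u[\zeta'] = q'$ for the target $u$ of the rule. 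The one rule schema that needs separate attention is the recursion rule $\frac{\rec{S_X|S} \goto a z}{\rec{X|S} \goto a z}$ of \df{format}: here $p = \rec{X|S}$ and $q = \rec{X'|S'}$ with $q \eqa p$, so $S'$ is a bound-variable renaming of $S$, and one must observe that $\rec{S_X|S} \eqa \rec{S'_{X'}|S'}$ — i.e. that $\alpha$-equivalence is preserved by the unfolding abbreviation $\rec{t|S} = t[\rec{Y|S}/Y]_{Y \in V_S}$ — before invoking the induction hypothesis on the premise.

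The main obstacle I expect is purely bookkeeping rather than conceptual: managing bound-variable renaming uniformly through the substitutions that realise the matching transitions, so that the witnessing $q'$ genuinely stands in the relation $\eqa$ to $p'$ and not merely "morally" so. In particular one has to be careful that when a free variable of the source of $p$ is itself bound deeper inside $q$ after renaming — which cannot happen for closed $p,q$, but the intermediate rule instances involve open terms — the renaming conventions of \df{substitutions} ("renaming bound variables if necessary to prevent name-clashes") are applied consistently, and that the distance function and well-foundedness of the ntytt rules (\df{lookahead}) are respected so the structural induction on $\pi$ is well-founded. A secondary subtlety is the limit-ordinal case of the outer induction, where one must check that $\bigcap_{\mu<\lambda} PT^-_\mu$ and $\bigcup_{\mu<\lambda} CT^-_\mu$ remain closed under $\eqa$; this follows routinely from the inductive hypotheses at all smaller stages, exactly as in the proof of the Lemma preceding \df{closure ordinal}. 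Once $\eqa$ is verified to be a bisimulation, the statement $p \eqa q \Rightarrow p \equiv_B q$ is immediate. \qed
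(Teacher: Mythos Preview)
Your proposal follows essentially the paper's approach: show that $\eqa$ is a bisimulation via a nested induction---outer on the ordinal $\lambda$, inner a structural induction on the proof $\pi$---with separate treatment of the ntyft case and the recursion-rule case, and auxiliary closure claims for $CT^-_\lambda$ and $PT^-_\lambda$ handling negative premises. Your symmetric indexing of the ordinal hypothesis (both sides at stage $\lambda$) differs cosmetically from the paper's asymmetric version (one side at $\lambda$, the other at the closure ordinal), but either organisation works.

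One genuine omission: the paper begins by invoking \thm{ntree} to assume, without loss of generality, that $P$ is in the \emph{pure} ntyft format with recursion. You skip this, yet your argument relies on it. The construction of the matching substitution $\zeta'$ on the right-hand-side variables of positive premises is not a single step: one must define $\zeta'(y)$ for a premise $t_y\ar{c}y$ only after $\zeta'(x)$ is already defined for every $x\in\var(t_y)$, so that $t_y[\zeta]\eqa t_y[\zeta']$ holds and the inner induction hypothesis applies. This is an induction on the \emph{distance} of $y$ to the source (\df{lookahead}), and that ordinal is well-defined only for well-founded---hence pure---rules. Your remark that ``the distance function and well-foundedness of the ntytt rules are respected so the structural induction on $\pi$ is well-founded'' conflates two distinct inductions: the induction on $\pi$ is well-founded simply because proof trees are well-founded; it is the variable-by-variable construction of $\zeta'$ that needs purity. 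Without the reduction via \thm{ntree}, the argument does not cover arbitrary TSSs in ntyft/ntyxt format.
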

\begin{proof}
By \thm{ntree} I may assume, without loss of generality, that
$P$ is in the pure ntyft format with recursion.
I show that $\eqa$ is a bisimulation on $\T(\Sigma)$---since $\eqa$ is also symmetric, this yields
the required result.\pagebreak

\noindent
Thus I need to show that, for $p,q\in\T(\Sigma)$ and $a\in A$,
\begin{itemize}
\item if $p\eqa q$ and $\PP p \goto{a} p'$, then there is a
$q'$ with $\PP q \goto{a} q'$ and $p' \eqa q'$,
\item if $p\eqa q$ and $\PP q \hoto{a} q'$, then there is a
$p'$ with $\PP p \hoto{a} p'$ and $p' \eqa q'$.
\end{itemize}
To this end it suffices to establish, for all ordinals $\lambda$, that
\begin{enumerate}
\item[4.] if $p\eqa q$ and $P \vdash p \goto{a}_\lambda p'$, then there is a
$q'$ with $\PP q \goto{a} q'$ and $p' \eqa q'$,
\item[2.] if $p\eqa q$ and $\PP q \hoto{a} q'$, then there is a
$p'$ with $P \vdash p \hoto{a}_\lambda p'$ and $p' \eqa q'$.
\end{enumerate}
The  desired result is then obtained by taking $\lambda$ to be the closure
ordinal $\kappa$ of \df{closure ordinal}.
This I will do by induction on $\lambda$, at the same time establishing that
\begin{enumerate}
\item[3.] if $p\eqa q$ and $P \vdash p \gonotto{a}_\lambda$, then $\PP q \gonotto{a}$,
\item[1.] if $p\eqa q$ and $\PP q \honotto{a}$, then $P \vdash p \honotto{a}_\lambda$.
\end{enumerate}
So assume Claims 1--4 have been established for all $\kappa < \lambda$.

Suppose $p\eqa q$ and $\PP q \honotto{a}$.
By Remark~\ref{determined} there is no $q'\mathbin\in\T(\Sigma)$ with \plat{$\PP q \goto{a} q'$}.
So by induction, using Claim 4 above, 
there is no $p'\mathbin\in\T(\Sigma)$ with \plat{$P\vdash p \goto{a}_\kappa p'$}
for some $\kappa < \lambda$. By \df{well-founded} $P \vdash p \honotto{a}_\lambda$. This yields Claim 1.

Now suppose $p\eqa q$ and $\PP q \hoto{a} q'$. I need to find a
$p'$ with $P \vdash q \hoto{a}_\lambda q'$ and $p' \eqa q'$.
This I will do by structural induction on the proof $\pi$ of $p \hoto{a} p'$ from $P$,
making a case distinction based on the shape of $p$.
\begin{itemize}
\item
Let $p=f(p_1,\ldots,p_n)$. Then $q=f(q_1,\ldots,q_n)$ where $p_i \eqa q_i$ for $i\mathbin=1,\ldots,n$.
Let $\pi$ be a proof of \plat{$\PP q\hoto{a} q'$} from $P$.
By Defs.~\ref{df:proof} and~\ref{df:format}, there must be a pure ntyft rule
$r=\frac{H}{f(x_1,...,x_n)\goto{a}t}$ in $R$ and a closed substitution
$\nu$ with $\nu(x_i)\mathbin=q_i$ for $i\mathbin=1,...,n$ and $t[\nu]\mathbin=q'$,
such that for each \plat{$(t_y\ar{c}y)\in H$} the transition \plat{$\PP t_y[\nu]\hoto{c}\nu(y)$}
is provable from $P$ by means of a strict subproof of $\pi$,
and $\PP u[\nu]\honotto{c}$ for each \plat{$(u\nar{c})\in H$}.
Next, I define a substitution $\sigma:\var(r)\rightarrow\T(\Sigma)$ such that
\begin{enumerate}[(i), leftmargin=*]
\item $\sigma(x_i)=p_i$ for $i=1,\ldots,n$,
\item $\sigma(y)\eqa \nu(y)$ for each $y\mathbin\in \var(r)$,
\item \plat{$P\vdash t_y[\sigma]\hoto{c}_\lambda \sigma(y)$} for each \plat{$(t_y\ar{c}y)\in H$}.
\end{enumerate}
The definition of $\sigma(y)$ and the inference of (i)--(iii) above proceed with induction on the
distance of $y\mathbin\in\var(r)$ from the source of $r$,
\vspace{1ex}

{\it Base case:\/} Let $\sigma(x_i):=p_i$ for $i=1,\ldots,n$, so that Property (i) is satisfied.
Regarding Property (ii), $\sigma(x_i)\eqa \nu(x_i)$ for $i=1,\ldots,n$.
\vspace{1ex}

{\it Induction step:\/} When defining $\sigma(y)$ for some $y\mathbin\in\Var$ with \plat{$(t_y\ar{c}y)\in H$},
by induction $\sigma(x)$ has been defined already for all $x\mathbin\in\var(t_y)$, so I may assume that
$\sigma(x)\eqa \nu(x)$ for all $x\mathbin\in\var(t_y)$ and hence $t_y[\sigma] \eqa t_y[\nu]$.

By induction on $\pi$, there is a $p_y$ with
\plat{$P\vdash t_y[\sigma]\hoto{c}_\lambda p_y$} and $p_y \eqa \nu(y)$.
Define $\sigma(y):=p_y$. Properties (ii) and (iii) now hold for $y$.
\vspace{1ex}

Take $p':=t[\sigma]$. So $p'\mathbin=t[\sigma]\eqa t[\nu] \mathbin= q'$ by Property (ii) of $\sigma$.
For each premise \plat{$(u\nar{c})\in H$} one has $u[\sigma]\eqa u[\nu]$ by Property (ii) of $\sigma$.
So \plat{$P\vdash u[\sigma]\honotto{c}_\lambda$} by Claim 1.
By Defs.~\ref{df:proof} and~\ref{df:format}, together with Property (iii) of $\sigma$, this implies
\plat{$P\vdash p=f(p_1,\ldots,p_n) \hoto{a}_\lambda t[\sigma] =p'$}.

\item
Let $p=\rec{X|S}$.  Then $q=\rec{\alpha(X) | S'[\alpha]}$ for some recursive specification
$S':V_S\rightarrow\IT(\Sigma)$ with $S_Y \eqa S'_Y$ for all $Y\in V_S$,
and an injective substitution $\alpha:V_S \rightarrow \Var$ such that the range of $\alpha$ contains
no variables occurring free in $\rec{S'_Y|S}$ for some $Y\in V_S$.
Now $\rec{S_X|S} \eqa \rec{S_X|S'} \eqa \rec{S'_{\alpha(X)} | S'[\alpha]}$.
Let $\pi$ be a proof of \plat{$\PP q\hoto{a} q'$} from $P$.
By Defs.~\ref{df:proof} and~\ref{df:format}
\plat{$\PP \rec{S'_{\alpha(X)}|S'[\alpha]} \hoto{a} q'$} is provable from $P$ by means of a strict subproof of $\pi$.
So by induction there is a $p'$ such that \plat{$P\vdash \rec{S_X|S} \hoto{a}_\lambda p'$} and $p' \eqa q'$.
By Defs.~\ref{df:proof} and~\ref{df:format}, $P\vdash p = \rec{X|S} \hoto{a}_\lambda p'$.
\end{itemize}
This establishes Claim 2.

Next, suppose that $p\eqa q$ and $P \vdash p \gonotto{a}_\lambda$.
By \df{well-founded} there is no $p'\mathbin\in\T(\Sigma)$ with \plat{$P \vdash p \hoto{a}_\lambda p'$}.
Using Claim 2, there is no $q'\mathbin\in\T(\Sigma)$ with \plat{$\PP q \hoto{a} q'$}.
By Remark~\ref{determined}, \plat{$\PP q \gonotto{a}$}. This yields Claim 3.

Claim 4 follows by structural induction on the proof of $p \ar{a}_\lambda p'$ from $P$, pretty much
in the same way as Claim 2 above.
\qed
\end{proof}
\pr{alpha-conversion bisimilar} could be classified as ``self-evident''.
One reason to spell out the proof above is to obtain a template for bisimilarity proofs in
the setting of the well-founded semantics.
I will use this template in the forthcoming lean congruence proof.

\section{A lean congruence result}

The following congruence proof is strongly inspired by the one in \cite{BolG96}.

\begin{theorem}\label{thm:congruence}
Bisimilarity is a lean precongruence for any language specified by a TSS in the ntyft/ntyxt format
with recursion.
\end{theorem}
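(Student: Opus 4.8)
The plan is to show that bisimilarity $\sqsubseteq_B$ is a lean precongruence, i.e.\ that $t[\rho]\sqsubseteq_B t[\nu]$ whenever $\rho\sqsubseteq_B\nu$, by exhibiting a single bisimulation relation $\mathcal{R}$ that contains all such pairs. Following \cite{BolG96}, I would take $\mathcal{R}$ to be the relation relating $t[\rho]$ to $t[\nu]$ for every term $t\in\IT(\Sigma)$ and all closed substitutions $\rho\sqsubseteq_B\nu$ (where the latter means $\rho(x)\sqsubseteq_B\nu(x)$ for all $x$, witnessed by bisimulations $\mathcal{R}_x$). By \thm{ntree} I may assume $P$ is in the pure ntyft format with recursion, which removes free variables and guarantees well-foundedness, so that the distance function of \df{lookahead} is available. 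Then I must verify the two transfer conditions of \df{refinement}: if $t[\rho]\goto{a}s$ then $t[\nu]\goto{a}s'$ with $s\mathrel{\mathcal R}s'$; and symmetrically for $\hoto{a}$-moves of $t[\nu]$ matched by $t[\rho]$.

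As in the proof of \pr{alpha-conversion bisimilar}, the transfer conditions must be replaced by their $\lambda$-indexed approximants, proved by induction on the ordinal $\lambda$ from the well-founded semantics: Claim~4 (a certain move $t[\rho]\goto{a}_\lambda s$ is matched by a possible/eventual move of $t[\nu]$), Claim~2 (a possible move $t[\nu]\hoto{a}q'$ is matched by $P\vdash t[\rho]\hoto{a}_\lambda p'$), and the two derived negative-literal claims (Claims~1 and~3) handling $\honotto{a}$ and $\gonotto{a}$. Inside the ordinal induction, each of Claims~2 and~4 is proved by a \emph{nested} structural induction on the proof $\pi$ of the transition from $P$, with a case analysis on the shape of $t$: the variable case $t=x$ reduces directly to the hypothesis $\rho(x)\sqsubseteq_B\nu(x)$ via $\mathcal R_x$; the recursion case $t=\rec{X|S}$ unfolds via \df{format} to a strict subproof about $\rec{S_X|S}$, reducing to a term with a smaller proof; and the operator case $t=f(t_1,\dots,t_n)$ is the heart of the argument.

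For the operator case one picks the pure ntyft rule $r=\frac{H}{f(x_1,\dots,x_n)\goto{a}u}$ and substitution $\mu$ used at the root of $\pi$, with $\mu(x_i)=t_i[\rho]$. One then builds, by induction on the distance of variables from the source of $r$, a matching substitution $\mu'$ with $\mu'(x_i)=t_i[\nu]$, such that $\mu(y)\mathrel{\mathcal R}\mu'(y)$ (or rather $\mu(y)\sqsubseteq_B\mu'(y)$ via a suitable bisimulation) for every $y\in\var(r)$, obtained by applying the inner induction hypothesis to each positive premise $t_y[\mu]\goto{c}\mu(y)$ — here using that $t_y[\rho']\mathrel{\mathcal R}t_y[\nu']$ for the substitutions constructed so far — and by applying Claim~1/3 to each negative premise. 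Then $\frac{H}{f(\vec x)\goto{a}u}$ fires under $\mu'$, giving the required matching transition with target $u[\mu']\mathrel{\mathcal R}u[\mu]$.

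The main obstacle is the operator case, and specifically the interaction between the $\mathcal R$ used to state the theorem and the auxiliary bisimulations $\mathcal R_x$ witnessing $\rho\sqsubseteq_B\nu$: one needs the premises $t_y[\rho]$-etc.\ to be related so that the inner induction hypothesis applies, yet the natural induction hypothesis is about $\mathcal R$ (pairs of the form $t'[\rho''],t'[\nu'']$), so some care is needed — as in \cite{BolG96} — to phrase $\mathcal R$ (e.g.\ as a relation parameterised also by a choice of witnessing bisimulations, or closed under an appropriate notion of composition) so that it is genuinely preserved by the premise-matching step. A secondary subtlety, compared with \cite{BolG96}, is that lookahead (premises $t_y\goto{c}y$ with $t_y$ a non-variable term) forces the distance-indexed construction of $\mu'$ rather than a one-step argument, and that the 3-valued semantics means the two directions of the bisimulation (certain moves vs.\ possible moves) are not symmetric and must be handled by the separate Claims~2 and~4, each with its own ordinal/proof double induction; but these are exactly the complications already rehearsed in the template proof of \pr{alpha-conversion bisimilar}, which I would follow closely.
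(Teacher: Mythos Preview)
Your overall architecture matches the paper's proof exactly: reduce to pure ntyft via \thm{ntree}, set up the four ordinal-indexed claims as in \pr{alpha-conversion bisimilar}, and inside each prove Claims~2 and~4 by structural induction on $\pi$ with a case split (variable/$\sqsubseteq_B$, operator, recursion) and a distance-indexed construction of the matching substitution. So the approach is right.

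The one point where you are vague is precisely the point the paper handles crisply. You correctly flag that after the induction hypothesis gives you $\mu(y)\mathrel{\mathcal R}\mu'(y)$ (not $\mu(y)\sqsubseteq_B\mu'(y)$---your parenthetical ``or rather $\sqsubseteq_B$'' is not available and should be dropped), you still need $t_{y'}[\mu]\mathrel{\mathcal R}t_{y'}[\mu']$ to continue the distance induction. With your definition of $\mathcal R$ as $\{(t[\rho],t[\nu])\mid \rho\sqsubseteq_B\nu\}$ this closure property is true but not immediate, and you leave its proof open. The paper sidesteps the difficulty by \emph{defining} $\mathcal R$ inductively as the least relation that (a)~contains $\sqsubseteq_B$, (b)~is closed under every operator $f$, and (c)~is closed under $\rec{Z|S}[\cdot]$ for every $S$. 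A one-line structural induction on $t$ then yields the key property~$(*)$: if $\rho(x)\mathrel{\mathcal R}\nu(x)$ for all $x$ then $t[\rho]\mathrel{\mathcal R}t[\nu]$. This is exactly the closure you need in the distance step, and it also makes the case split for Claims~2/4 a case split on the \emph{last rule} used to derive $p\mathrel{\mathcal R}q$, rather than on a chosen witnessing term~$t$. Your relation and the paper's coincide, but the inductive presentation is what makes the ``main obstacle'' evaporate; you should adopt it rather than the alternatives you suggest (parameterising by witnessing bisimulations, or closing under composition), which would work but add unnecessary bookkeeping.
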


\begin{trivlist} \item[\hspace{\labelsep}\bf Proof:]
By \thm{ntree} I may assume, without loss of generality, that
$P=(\Sigma,R)$ is a TSS in the pure ntyft format with recursion.
Let $\R$ be the smallest lean precongruence containing bisimilarity, i.e.,
$\mathord{\R}\subseteq \T(\Sigma) \times \T(\Sigma)$
is the smallest relation on processes satisfying
\begin{itemize}
\item if $p \sqsubseteq_B q$ then $p \R q$,
\item if $(f,n)\mathbin\in\Sigma$ and $p_i \mathbin{\R} q_i$ for all $i\mathbin=1,...,n$,
      then $f(p_1,\ldots,p_n)\R f(q_1,\ldots,q_n)$,
\item and if $S:V_S \rightarrow \IT(\Sigma)$ with $Z \in V_S \subseteq \Var$,
      and $\rho,\nu:\Var\setminus V_S \rightarrow\T(\Sigma)$ satisfy $\rho(x) \R \nu(x)$ for all
      $x\mathbin\in\Var\setminus V_S$, then $\rec{Z|S}[\rho] \R \rec{Z|S}[\nu]$.
\end{itemize}
A trivial structural induction on $t\mathbin\in\IT(\Sigma)$, using the last two clauses, shows
that if $\rho,\nu\!:\!\Var \mathbin\rightarrow\T(\Sigma)$ satisfy $\rho(x) \mathbin{\R} \nu(x)$
for all $x\mathbin\in\var(t)$, then $t[\rho] \R t[\nu]$.\hfill ({\color{red}*})\\
As $\rec{\_|S}[\rho]:V_S\rightarrow\T(\Sigma)$ and $\rec{\_|S}[\nu]:V_S\rightarrow\T(\Sigma)$,
this implies that in the last clause one even has
$\rec{t|S}[\rho] \R \rec{t|S}[\nu]$ for all terms $t\in\IT(\Sigma,V_S)$.\hfill (\$)

It suffices to show that $\R$ is a bisimulation, because this implies
${\R}\subseteq{\sqsubseteq_B}$, so that $\R$ equals $\sqsubseteq_B$, and
({\color{red}*}) says that $\R$ is a lean precongruence.
Thus I need to show that, for $p,q\in\T(\Sigma)$ and $a\in A$,
\begin{itemize}
\item if $p\R q$ and $\PP p \goto{a} p'$, then there is a
$q'$ with $\PP q \goto{a} q'$ and $p' \BR q'$,
\item if $p\R q$ and $\PP q \hoto{a} q'$, then there is a
$p'$ with $\PP p \hoto{a} p'$ and $p' \mathrel{\color{red}\R} q'$.
\end{itemize}
To this end it suffices to establish, for all ordinals $\lambda$, that
\begin{enumerate}
{\color{darkgreen} \item[4.] if $p\R q$ and $P \vdash p \goto{a}_\lambda p'$, then there is a
$q'$ with $\PP q \goto{a} q'$ and $p' \BR q'$,}
{\color{darkred} \item[2.] if $p\R q$ and $\PP q \hoto{a} q'$, then there is a
$p'$ with $P \vdash p \hoto{a}_\lambda p'$ and $p' \mathrel{\color{red}\R} q'$}.
\end{enumerate}
The  desired result is then obtained by taking $\lambda$ to be the closure
ordinal $\kappa$ of \df{closure ordinal}.
This I will do by induction on $\lambda$, at the same time establishing that
\begin{enumerate}
\item[3.] if $p\R q$ and $P \vdash p \gonotto{a}_\lambda$, then $\PP q \gonotto{a}$,
\item[1.] if $p\R q$ and $\PP q \honotto{a}$, then $P \vdash p \honotto{a}_\lambda$.
\end{enumerate}
So assume Claims 1--4 have been established for all $\kappa < \lambda$.

Suppose $p\R q$ and $\PP q \honotto{a}$.
By Remark~\ref{determined} there is no $q'\mathbin\in\T(\Sigma)$ with \plat{$\PP q \goto{a} q'$}.
So by induction, using {\color{darkgreen}Claim 4} above, 
there is no $p'\mathbin\in\T(\Sigma)$ with \plat{$P\vdash p \goto{a}_\kappa p'$}
for some $\kappa < \lambda$. By \df{well-founded} $P \vdash p \honotto{a}_\lambda$. This yields Claim 1.

{\color{darkred}%
Now suppose $p\R q$ and $\PP q \hoto{a} q'$. I need to find a
$p'$ with $P \vdash p \hoto{a}_\lambda p'$ and $p' \BRR q'$.
This I will do by structural induction on the proof $\pi$ of $q \hoto{a} q'$ from $P$.
I make a case distinction based on the derivation of $p\R q$.
\begin{itemize}
\item
Let $p\sqsubseteq_B q$.
Using that $\sqsubseteq_B$ is a bisimulation, there must be a process $p'$ such
that $\PP p \hoto{a} p'$ and $p'\sqsubseteq_B q'$, hence $p' \BRR q'$.
Since $\PP p \hoto{a} p'$, certainly \plat{$P\vdash p\hoto{a}_\lambda p'$}, by Remark~\ref{union}.
\item
Let $p=f(p_1,\ldots,p_n)$ and
$q=f(q_1,\ldots,q_n)$ where $p_i \R q_i$ for $i=1,\ldots,n$.
Let $\pi$ be a proof of \plat{$q\hoto{a} q'$} from $P$.
By Defs.~\ref{df:proof} and~\ref{df:format}, there must be a pure ntyft rule
$r=\frac{H}{f(x_1,...,x_n)\goto{a}t}$ in $R$ and a closed substitution
$\nu$ with $\nu(x_i)\mathbin=q_i$ for $i\mathbin=1,...,n$ and $t[\nu]\mathbin=q'$,
such that for each \plat{$(t_y\ar{c}y)\in H$} the transition \plat{$t_y[\nu]\hoto{c}\nu(y)$}
is provable from $P$ by means of a strict subproof of $\pi$,
and $\PP u[\nu]\honotto{c}$ for each \plat{$(u\nar{c})\in H$}.
Next, I define a substitution $\sigma:\var(r)\rightarrow\T(\Sigma)$ such that
\begin{enumerate}[(i), leftmargin=*]
\item $\sigma(x_i)=p_i$ for $i=1,\ldots,n$,
\item $\sigma(y)\BRR \nu(y)$ for each $y\mathbin\in \var(r)$,
\item \plat{$P\vdash t_y[\sigma]\hoto{c}_\lambda \sigma(y)$} for each \plat{$(t_y\ar{c}y)\in H$}.
\end{enumerate}
The definition of $\sigma(y)$ and the inference of (i)--(iii) above proceed with induction on the
distance of $y\mathbin\in\var(r)$ from the source of $r$,
\vspace{1ex}

{\it Base case:\/} Let $\sigma(x_i):=p_i$ for $i=1,\ldots,n$, so that Property (i) is satisfied.
Regarding Property (ii), $\sigma(x_i)\R \nu(x_i)$ for $i=1,\ldots,n$.
\vspace{1ex}

{\it Induction step:\/} When defining $\sigma(y)$ for some $y\mathbin\in\Var$ with \plat{$(t_y\ar{c}y)\in H$},
by induction $\sigma(x)$ has been defined already for all $x\mathbin\in\var(t_y)$, so {I may assume that
$\sigma(x)\BRR \nu(x)$ for all $x\mathbin\in\var(t_y)$ and hence $t_y[\sigma] \BRR t_y[\nu]$ by ({\color{red}*}).

By induction on $\pi$, there is a $p_y$ with
\plat{$P\vdash t_y[\sigma]\hoto{c}_\lambda p_y$} and $p_y \BRR \nu(y)$.}
Define $\sigma(y):=p_y$. Properties (ii) and (iii) now hold for $y$.
\vspace{1ex}

Take $p':=t[\sigma]$. So $p'\mathbin=t[\sigma]\BRR t[\nu] \mathbin= q'$ by ({\color{red}*}) and Property (ii) of $\sigma$.
For each premise \plat{$(u\nar{c})\in H$} one has $u[\sigma]\BRR u[\nu]$ by ({\color{red}*}) and Property (ii) of $\sigma$.
So \plat{$P\vdash u[\sigma]\honotto{c}_\lambda$} by Claim 1.
By Defs.~\ref{df:proof} and~\ref{df:format}, together with Property (iii) of $\sigma$, this implies\\
\plat{$P\vdash p=f(p_1,\ldots,p_n) \hoto{a}_\lambda t[\sigma] =p'$}.
\item
Let $p\mathbin=\rec{Z|S}[\rho]\mathbin=\rec{Z|S[\rho]}$ and $q\mathbin=\rec{Z|S}[\sigma]\mathbin=\rec{Z|S[\sigma]}$
where $S:V_S \rightarrow \IT(\Sigma)$
with $Z \in V_S \subseteq \Var$, $\rho,\sigma\!:\Var{\setminus} V_S \mathbin\rightarrow\T(\Sigma)$,
and for all $x\in\Var\setminus V_S$ one has $\rho(x) \R \sigma(x)$.
Let $\pi$ be a proof of \plat{$q\hoto{a} q'$} from $P$.
By Defs.~\ref{df:proof} and~\ref{df:format}
\plat{$\rec{S_Z|S[\sigma]} \hoto{a} q'$} is provable from $P$ by means of a strict subproof of $\pi$.
By (\$) above one has $\rec{S_Z|S[\rho]} \R \rec{S_Z|S[\sigma]}$.
So by induction there is a $p'$ such that \plat{$P\vdash \rec{S_Z|S[\rho]} \hoto{a}_\lambda p'$} and $p' \BRR q'$.
By Defs.~\ref{df:proof} and~\ref{df:format}, $P\vdash p = \rec{Z|S[\sigma]} \hoto{a}_\lambda p'$.
\end{itemize}}

Next, suppose that $p\R q$ and $P \vdash p \gonotto{a}_\lambda$.
By \df{well-founded} there is no $p'\mathbin\in\T(\Sigma)$ with \plat{$P \vdash p \hoto{a}_\lambda p'$}.
Using {\color{darkred}Claim 2}, there is no $q'\mathbin\in\T(\Sigma)$ with \plat{$\PP q \hoto{a} q'$}.
By Remark~\ref{determined}, \plat{$\PP q \gonotto{a}$}. This yields Claim 3.

\hypertarget{Claim4}{\color{darkgreen}%
Finally, suppose $p\R q$ and $P \vdash p \goto{a}_\lambda p'$. I need to find a
$q'$ with $\PP q \goto{a} q'$ and $p' \BR q'$.} \color{darkgreen}%
This I will do by structural induction on the proof $\pi$ of $p \ar{a}_\lambda p'$ from $P$.
I make a case distinction based on the derivation of $p\R q$.
\begin{itemize}
\item
Let $p\sqsubseteq_B q$. Since $P \vdash p \goto{a}_\lambda p'$, certainly $\PP p\goto{a} p'$, by Remark~\ref{union}.
Using that $\sqsubseteq_B$ is a bisimulation, there must be a process $q'$ such
that $\PP q \ar{a} q'$ and $p'\sqsubseteq_B q'$, hence $p' \BR q'$.
\item
Let $p\mathbin=f(p_1,\ldots,p_n)$ and
$q\mathbin=f(q_1,\ldots,q_n)$ where $p_i \R q_i$ for $i=1,\ldots,n$.
Let $\pi$ be a proof of \plat{$p\ar{a}_\lambda p'$} from $P$.\linebreak[4]
By Defs.~\ref{df:proof},~\ref{df:well-founded} and~\ref{df:format}, there must be a pure ntyft rule
$r=\frac{H}{f(x_1,...,x_n)\goto{a}t}$ in $R$ and a closed substitution
$\sigma$ with $\sigma(x_i)\mathbin=p_i$ for $i\mathbin=1,...,n$ and $t[\sigma]\mathbin=p'$,
such that for each \plat{$(t_y\ar{c}y)\in H$} the transition \plat{$t_y[\sigma]\mathbin{\ar{c}_\lambda}\sigma(y)$}
is provable from $P$ by means of a strict subproof of $\pi$,
and $P\vdash u[\sigma]\nar{c}_\lambda$ for each \plat{$(u\nar{c})\in H$}.
Next, I define a substitution $\nu\!:\!\var(r)\mathbin\rightarrow\T(\Sigma)$ such that
\begin{enumerate}[(i), leftmargin=*]
\item $\nu(x_i)=q_i$ for $i=1,\ldots,n$,
\item $\sigma(y)\BR \nu(y)$ for each $y\mathbin\in \var(r)$,
\item \plat{$\PP t_y[\nu]\ar{c} \nu(y)$} for each \plat{$(t_y\ar{c}y)\in H$}.
\end{enumerate}
The definition of $\nu(y)$ and the inference of (i)--(iii) above proceed with induction on the distance of
$y\mathbin\in\var(r)$ from the source of $r$,
\vspace{1ex}

{\it Base case:\/} Let $\nu(x_i):=q_i$ for $i=1,\ldots,n$, so that Property (i) is satisfied.
Regarding Property (ii), $\sigma(x_i)\R \nu(x_i)$ for $i=1,\ldots,n$.
\vspace{1ex}

{\it Induction step:\/} When defining $\nu(y)$ for some $y\mathbin\in\Var$ with \plat{$(t_y\ar{c}y)\in H$},
by induction $\nu(x)$ has been defined already for all $x\mathbin\in\var(t_y)$, so I may assume that
$\sigma(x)\BR \nu(x)$ for all $x\mathbin\in\var(t_y)$  {\color{blue}and hence $t_y[\sigma] \BR t_y[\nu]$ by ({\color{red}*})}.

By induction on $\pi$, {\color{blue}there is a $q_y$ with
\plat{$\PP t_y[\nu]\ar{c} q_y$} and $\sigma(y) \BR q_y$}.
Define $\nu(y):=q_y$. Properties (ii) and (iii) now hold for $y$.
\vspace{1ex}

Take $q':=t[\nu]$. So $p'\mathbin=t[\sigma]\BR t[\nu] \mathbin= q'$ by ({\color{red}*}) and Property (ii) of $\nu${\color{blue}.}
For each premise \plat{$(u\nar{c})\in H$} one has $u[\sigma]\BR u[\nu]$ by ({\color{red}*}) and Property (ii) of $\nu$.
So \plat{$\PP u[\nu]\nar{c}$} by Claim 3.
Since $CT^+$ is closed under deduction, together with Property (iii) of $\nu$ this implies 
\plat{$\PP q=f(q_1,\ldots,q_n) \ar{a} t[\nu] =q'$}.
\item
Let $p\mathbin=\rec{Z|S}[\rho]\mathbin=\rec{Z|S[\rho]}$ and $q\mathbin=\rec{Z|S}[\nu]\mathbin=\rec{Z|S[\nu]}$
where $S:V_S \rightarrow \IT(\Sigma)$
with $Z \in V_S \subseteq \Var$, $\rho,\nu\!:\Var{\setminus} V_S \mathbin\rightarrow\T(\Sigma)$,
and for all $x\in\Var\setminus V_S$ one has $\rho(x) \R \nu(x)$.
Let $\pi$ be a proof of \plat{$p\ar{a}_\lambda p'$} from $P$.
By Defs.~\ref{df:proof},~\ref{df:well-founded} and~\ref{df:format}
\plat{$\rec{S_Z|S[\rho]} \ar{a}_\lambda p'$} is provable from $P$ by means of a strict subproof of $\pi$.
By (\$) above one has $\rec{S_Z|S[\rho]} \R \rec{S_Z|S[\nu]}$.
So by induction there is a $q'$ such that \plat{$\PP \rec{S_Z|S[\nu]} \ar{a} q'$} and $p' \BR q'$.
By Defs.~\ref{df:proof} and~\ref{df:format}, $\PP q = \rec{Z|S[\nu]} \ar{a} q'$.
\end{itemize}
This yields Claim 4.\hfill $\Box$
\end{trivlist}
The above result implies that any ntyft/ntyxt language with recursion satisfies
congruence requirement (\ref{comp-operators-closed}) up to $\sqsubseteq_B$, but is not strong
enough to yield (\ref{comp-recursion-closed}).

\section{A full congruence result}

In this section I deal with positive TSSs only.
Here the relations $\hoto{a}_\lambda$ and $\goto{a}_\mu$ for ordinals $\lambda$ and $\mu$ all
coincide, and ${\sqsubseteq_B} = {\equiv_B}$.
The following auxiliary concept was used in \cite{Mi90ccs} to show that CCS satisfies
Condition (\ref{comp-recursion-closed}) of \df{full congruence}.

\begin{definition}\label{df:upto}\rm
A symmetric relation $\mathord{\R} \subseteq \T(\Sigma)\times\T(\Sigma)$ is a \emph{bisimulation up to $\sim$} if
$p\R q$ and \plat{$\PP p \goto{a} p'$} imply that there is a $q'$ with \plat{$\PP q \goto{a} q'$} and $p' \sim \R \sim q'$,
for all $a\in A$.
Here ${\sim \R \sim} := \{(r,s)\!\mid \exists r'\!,s'\!.~ r \sim r' \R s' \sim s\}$.
\end{definition}

\begin{proposition}[\cite{Mi90ccs}]\label{pr:upto}
If $p \R q$ for some bisimulation $\R$ up to $\equiv_B$, then $p \equiv_B q$.
\end{proposition}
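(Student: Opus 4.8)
The plan is to exhibit an honest bisimulation containing the pair $(p,q)$, built from the given up-to bisimulation $\R$. Concretely, I would set $\mathcal{S} := {\equiv_B}\,\R\,{\equiv_B}$, that is, $\mathcal{S} = \{(r,s)\mid \exists r_1,s_1.~ r\equiv_B r_1 \R s_1 \equiv_B s\}$, which is exactly the notation ${\sim\R\sim}$ of \df{upto} with $\sim$ instantiated as $\equiv_B$. The goal is then to show that $\mathcal{S}$ is a bisimulation. Recall that in the positive setting treated in this section the two clauses of \df{refinement} collapse to the single classical transfer condition, that $\equiv_B$ is an equivalence relation (reflexive, symmetric, transitive), and that $\equiv_B$ — coinciding here with $\sqsubseteq_B$, the union of all bisimulations — is itself a bisimulation.

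Since $\equiv_B$ is reflexive we have ${\R}\subseteq{\mathcal{S}}$, hence in particular $p\mathrel{\mathcal S}q$; and $\mathcal{S}$ is symmetric because both $\R$ and $\equiv_B$ are. Thus, once $\mathcal{S}$ is known to be a bisimulation, \df{refinement} yields $p\sqsubseteq_B q$, and by symmetry also $q\sqsubseteq_B p$, i.e.\ $p\equiv_B q$, which is the claim.

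The substance is the transfer condition for $\mathcal{S}$. Suppose $r\mathrel{\mathcal S}s$, witnessed by $r\equiv_B r_1 \R s_1 \equiv_B s$, and suppose $\PP r\goto{a}r'$. Since $\equiv_B$ is a bisimulation there is $r_1'$ with $\PP r_1\goto{a}r_1'$ and $r'\equiv_B r_1'$. Since $\R$ is a bisimulation up to $\equiv_B$ there is $s_1'$ with $\PP s_1\goto{a}s_1'$ and $r_1'\mathrel{\mathcal S}s_1'$, say $r_1'\equiv_B r_2 \R s_2 \equiv_B s_1'$. Since $\equiv_B$ is a bisimulation there is $s'$ with $\PP s\goto{a}s'$ and $s_1'\equiv_B s'$. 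Now transitivity of $\equiv_B$ gives $r'\equiv_B r_2$ and $s_2\equiv_B s'$, so $r'\equiv_B r_2 \R s_2 \equiv_B s'$, i.e.\ $r'\mathrel{\mathcal S}s'$. The matching requirement in the opposite direction follows from the symmetry of $\mathcal{S}$.

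I do not expect a genuine obstacle here; this is the textbook justification of the up-to technique. The single point that deserves care is that one must \emph{not} attempt to verify the bisimulation conditions on $\R$ itself — that is precisely what fails, and what makes the technique need justifying — but first close $\R$ under $\equiv_B$ on both sides and then use transitivity of $\equiv_B$ to absorb the extra $\equiv_B$-steps generated at each transfer step; this is also the reason $\equiv_B$, rather than an arbitrary relation, must be the relation allowed around $\R$ in \df{upto}.
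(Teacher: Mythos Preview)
Your argument is correct and matches the paper's own proof exactly: the paper also takes $\mathcal{S}={\equiv_B}\R{\equiv_B}$, invokes reflexivity of $\equiv_B$ to get $(p,q)\in\mathcal{S}$, and appeals to symmetry and transitivity of $\equiv_B$ to verify that $\mathcal{S}$ is a bisimulation. You have simply spelled out what the paper leaves as ``straightforward''.
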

\begin{proof}
Using the reflexivity of $\equiv_B$ it suffices to show that $\equiv_B \R \equiv_B$ is a bisimulation.
Using symmetry and transitivity of $\equiv_B$ this is straightforward.
\qed
\end{proof}

\begin{theorem}\label{thm:compositionality positive}
Bisimilarity is a full congruence for any language specified by a TSS in the tyft/tyxt format
with recursion.
\end{theorem}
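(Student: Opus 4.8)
The plan is to reduce, via \thm{ntree}, to the case that $P=(\Sigma,R)$ is in the \emph{pure} tyft format with recursion; since we are in the positive setting, all relations $\hoto{a}_\lambda$ and $\goto{a}_\mu$ coincide, ${\sqsubseteq_B}={\equiv_B}$, and \df{proof} suffices (no induction on ordinals is needed). By \thm{congruence}, $\equiv_B$ is a lean congruence, so congruence requirement (\ref{comp-operators-closed}) already holds, and moreover $\equiv_B$ is preserved by substitution into arbitrary terms. It then remains to establish (\ref{comp-recursion-closed}): given $S,S'\!:\!W\rightarrow\IT(\Sigma,W)$ with $X\mathbin\in W$ and $S_Y[\sigma]\equiv_B S'_Y[\sigma]$ for all $Y\mathbin\in W$ and $\sigma\!:\!W\rightarrow\T(\Sigma)$, to show $\rec{X|S}\equiv_B\rec{X|S'}$.

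For this I would use the traditional up-to method of \pr{upto}. Let $\R$ be the smallest \emph{symmetric} relation on $\T(\Sigma)$ such that (i) ${\equiv_B}\subseteq{\R}$; (ii) $\R$ is closed under the operators of $\Sigma$; (iii) for every recursive specification $T$ and $Z\mathbin\in V_T$, if $\rho,\nu$ agree up to $\R$ on the variables free in $\rec{Z|T}$ then $\rec{Z|T}[\rho]\R\rec{Z|T}[\nu]$; and (iv) $\rec{Z|S''}\R\rec{Z|S'''}$ whenever $S'',S'''\!:\!W''\rightarrow\IT(\Sigma,W'')$, $Z\mathbin\in W''$, and $S''_Y[\tau]\equiv_B S'''_Y[\tau]$ for all $Y\mathbin\in W''$ and $\tau\!:\!W''\rightarrow\T(\Sigma)$. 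By clause~(iv) and the hypothesis, $\rec{X|S}\R\rec{X|S'}$, so by \pr{upto} it suffices to prove that $\R$ is a bisimulation up to $\equiv_B$. Clauses (ii) and (iii) yield, by a trivial induction on $t\mathbin\in\IT(\Sigma)$, the substitution lemma ``$t[\rho]\R t[\nu]$ whenever $\rho,\nu$ agree up to $\R$ on $\var(t)$''; together with the lean congruence of $\equiv_B$ this upgrades to: $t[\rho]$ and $t[\nu]$ are related by $\equiv_B\R\equiv_B$ whenever $\rho,\nu$ agree up to $\equiv_B\R\equiv_B$ on $\var(t)$.

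To see that $\R$ is a bisimulation up to $\equiv_B$ one shows, for $p\R q$ and $\PP p\goto{a}p'$, that there is a $q'$ with $\PP q\goto{a}q'$ and $p'$ related to $q'$ by $\equiv_B\R\equiv_B$, by structural induction on the proof $\pi$ of $p\goto{a}p'$, with a case distinction on the clause witnessing $p\R q$. Clause~(i) is immediate since $\equiv_B$ is a bisimulation. Clauses~(ii) and~(iii) follow the corresponding cases ($p=f(p_1,\ldots,p_n)$ and $p=\rec{Z|S}[\rho]$) in the proof of \thm{congruence}, now reasoning up to $\equiv_B$ and using the upgraded substitution lemma when building the matching substitution on the variables of the applicable tyft rule by induction on their distance to the source; this is the ``large overlap'' between the two proofs. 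The new case is clause~(iv): $p=\rec{Z|S''}$, $q=\rec{Z|S'''}$. By \df{format}, $\pi$ has a strict subproof of $\rec{S''_Z|S''}\goto{a}p'$; the substitutions $Y\mapsto\rec{Y|S''}$ and $Y\mapsto\rec{Y|S'''}$ on $W''$ are related by $\R$ componentwise (clause~(iv) applied to each component), so by the substitution lemma $\rec{S''_Z|S''}\R\rec{S''_Z|S'''}$, whence by induction there is a $p''$ with $\rec{S''_Z|S'''}\goto{a}p''$ and $p'$ related to $p''$ by $\equiv_B\R\equiv_B$. Writing $\theta$ for $Y\mapsto\rec{Y|S'''}$, one has $\rec{S''_Z|S'''}=S''_Z[\theta]\equiv_B S'''_Z[\theta]=\rec{S'''_Z|S'''}$ by the hypothesis on $S'',S'''$; since $\equiv_B$ is a bisimulation there is a $p'''$ with $\rec{S'''_Z|S'''}\goto{a}p'''$ and $p''\equiv_B p'''$, so $q=\rec{Z|S'''}\goto{a}p'''$ by \df{format}, and $p'$ is related to $p'''$ by $\equiv_B\R\equiv_B$. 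Symmetry of $\R$ gives the matching clause. Then \pr{upto} delivers $\rec{X|S}\equiv_B\rec{X|S'}$, which with (\ref{comp-operators-closed}) from \thm{congruence} shows $\equiv_B$ is a full congruence.

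The main obstacle I expect is keeping the induction in the previous paragraph well-founded while reasoning up to $\equiv_B$: in clauses~(ii)/(iii) the matching substitution $\nu$ on a rule's variables is built stepwise, and each already-constructed value must be related to the given substitution \emph{strongly enough} to reapply the substitution lemma, whereas the induction hypothesis --- being for a bisimulation up to $\equiv_B$ --- returns relatedness only up to $\equiv_B$ on both sides. Carrying the lean congruence of $\equiv_B$ through the whole argument (so that the upgraded substitution lemma is available) is what closes this gap, and it is precisely what forces clause~(iv) to be phrased with $\equiv_B$-related bodies rather than merely $\R$-related ones: only then is the final ``skeleton-crossing'' step --- passing from $\rec{S''_Z|S'''}\goto{a}p''$ to $\rec{S'''_Z|S'''}\goto{a}p'''$ --- discharged by $\equiv_B$ being a bisimulation, instead of demanding a further appeal to the induction hypothesis on a transition whose proof depth is no longer controlled (which would fail, as one cannot add transitivity to $\R$ without breaking the structural induction on proofs). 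This coupling explains why the full congruence result does not seem obtainable without the lean result in hand, even in the tyft/tyxt case.
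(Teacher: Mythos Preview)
Your overall strategy matches the paper's: reduce to pure tyft via \thm{ntree}, build a candidate relation $\R$ closed under the operators and under substitution into recursion contexts, and show it is a bisimulation up to $\equiv_B$, invoking \pr{upto}. Your clause~(iv) quantifies over all qualifying pairs $S'',S'''$ whereas the paper fixes a single pair $S,S'$, and your clause~(i) $\equiv_B\subseteq\R$ is absent in the paper; these are harmless cosmetic differences, and your handling of clause~(iv) is exactly the paper's.

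There is, however, a real gap in your treatment of clause~(ii). You correctly identify the obstacle---the induction hypothesis returns relatedness up to $\equiv_B$ on both sides---but your proposed fix, the ``upgraded substitution lemma'' for $\equiv_B\R\equiv_B$, does not close it. Suppose the invariant on the partially constructed substitution is $\sigma(x)\mathrel{\equiv_B\R\equiv_B}\nu(x)$ for $x\in\var(t_y)$. Your upgraded lemma yields $t_y[\sigma]\equiv_B u\R v\equiv_B t_y[\nu]$ for some $u,v$. You hold a strict subproof of $t_y[\sigma]\goto{c}\sigma(y)$, but to invoke the structural induction hypothesis on the $\R$-related pair $(u,v)$ you would need a proof of a transition \emph{from $u$} that is still a strict subproof of $\pi$; the $\equiv_B$-step from $t_y[\sigma]$ to $u$ furnishes only \emph{some} proof of $u\goto{c}u'$, with no bound on its depth. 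Your clause~(i) does not rescue this, since $\R$ is not transitive.

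The paper's resolution is to strengthen the statement proved by structural induction on $\pi$ to the asymmetric $p'\mathrel{\R\equiv_B}q'$ (which still suffices for bisimulation up to $\equiv_B$), and correspondingly to maintain the one-sided invariant $\sigma(y)\mathrel{\R\equiv_B}\nu(y)$ while building $\nu$. In the step one splits $\sigma(x)\R\rho(x)\equiv_B\nu(x)$, obtains $t_y[\sigma]\R t_y[\rho]$ by the basic substitution lemma and $t_y[\rho]\equiv_B t_y[\nu]$ by \thm{congruence}, applies the induction hypothesis to the $\R$-pair $(t_y[\sigma],t_y[\rho])$ \emph{on the given strict subproof} to get $r_y$ with $\sigma(y)\mathrel{\R\equiv_B}r_y$, and finally passes from $t_y[\rho]$ to $t_y[\nu]$ via bisimilarity, the extra $\equiv_B$ absorbing by transitivity. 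Your cases~(i), (iii), and~(iv) already deliver $\R\equiv_B$ (not merely $\equiv_B\R\equiv_B$), so this asymmetric invariant propagates through the whole induction once you adopt it in case~(ii).
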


\begin{trivlist} \item[\hspace{\labelsep}\bf Proof:]
By \thm{ntree} I may assume, without loss of generality, that
$P=(\Sigma,R)$ is a TSS in the pure tyft format with recursion.
Let $S,S':W \rightarrow \IT(\Sigma,W)$ be recursive specifications with
$S_Y[\sigma] \equiv_B S'_Y[\sigma]$ for all $Y\in W$ and $\sigma:W\rightarrow \T(\Sigma)$.$^{\mbox{\scriptsize\ref{onlytwo}}}$
I need to show that $\rec{X|S}\equiv_B\rec{X|S'}$ for all $X\in W$.
Let $\mathord{\R}\subseteq \T(\Sigma) \times \T(\Sigma)$
be the smallest relation on processes satisfying
\begin{itemize}
\item $\rec{X|S} \R \rec{X|S'}$ and $\rec{X|S'} \R \rec{X|S}$ for all $X\in W$,
\item if $(f,n)\mathbin\in\Sigma$ and $p_i \mathbin{\R} q_i$ for all $i\mathbin=1,...,n$,
      then $f(p_1,\ldots,p_n)\R f(q_1,\ldots,q_n)$,
\item and if ${S''}:V_{S''} \rightarrow \IT(\Sigma)$ with $Z \in V_{S''} \subseteq \Var$,
      and $\rho,\nu:\Var\setminus V_{S''} \rightarrow\T(\Sigma)$ satisfy $\rho(x) \R \nu(x)$ for all
      $x\in\Var\setminus V_{S''}$, then $\rec{Z|{S''}}[\rho] \R \rec{Z|{S''}}[\nu]$.
\end{itemize}
A trivial structural induction on $t\mathbin\in\IT(\Sigma)$, using the last two clauses, shows
that if $\rho,\nu:\Var \rightarrow\T(\Sigma)$ satisfy $\rho(x) \R \nu(x)$ for all $x\in\Var$,
then $t[\rho] \R t[\nu]$.\hfill ({\color{red}*})\\
So in the first clause one even has
$\rec{t|S} \R \rec{t|S'}$ for all $t\in\IT(\Sigma,W)$,\hfill (\#)\\
and in the last clause $\rec{t|{S''}}[\rho] \R \rec{t|{S''}}[\nu]$ for all $t\in\IT(\Sigma,V_{S''})$.\hfill (\$)

It suffices to show that $\R$ is a bisimulation up to $\equiv_B$, because with \pr{upto} this implies ${\R}\subseteq{\equiv_B}$.
By construction $\R$ is symmetric. So it suffices to show that,
\begin{center}
if $p\R q$ and $P \vdash p \ar{a} p'$, then there is a
$q'$ with $P \vdash q \ar{a} q'$ and $p' \R \equiv_B q'$,
\end{center}
for all $p,q\in\T(\Sigma)$ and $a\in A$,
This I will do by structural induction on the proof $\pi$ of $p \ar{a} p'$ from $P$.
I make a case distinction based on the derivation of $p\R q$.

\begin{itemize}
\item
Let $p=\rec{X|S}$ and $q=\rec{X|S'}$ with $X \mathbin\in W$.
Let $\pi$ be a proof of \plat{$p\ar{a}p'$} from $P$.
By Definitions~\ref{df:proof} and~\ref{df:format}
\plat{$\rec{S_X|S} \ar{a} p'$} is provable from $P$ by means of a strict subproof of $\pi$.
By (\#) above one has $\rec{S_X|S} \R \rec{S_X|S'}$.
So by induction there is an $r'$ such that \plat{$P\vdash \rec{S_X|S'} \ar{a} r'$} and $p' \R\equiv_B r'$.
Since $\rec{\_\!\_ | S'}$ is a substitution of the form  $\sigma:W \rightarrow\T(\Sigma)$,
one has $\rec{S_X|S'} \equiv_B \rec{S'_X|S'}$.
Hence there is a $q'$ such that \plat{$P\vdash \rec{S'_X|S'} \ar{a} q'$} and $r' \equiv_B q'\!$.
So $p' \R \equiv_B q'\!$.
By Definitions~\ref{df:proof} and~\ref{df:format} $P\vdash q \mathbin= \rec{X|S'} \mathbin{\ar{a}} q'\!$.
\item
The case $p=\rec{X|S'}$ and $q=\rec{X|S}$ goes likewise, swapping the r\^oles of $S'_X$ and $S_X$,
and using the substitution $\rec{\_\!\_ | S}$.\,\footnote{\label{onlytwo}%
This proof shows that in the full congruence property (\ref{comp-recursion-closed}) one only
needs to assume $S_Y[\sigma] \equiv_B S'_Y[\sigma]$ for two specific substitutions $\sigma$:
namely $\sigma(Y):=\rec{Y|S'}$, resp.~$\rec{Y|S}$.}
\item
The remaining two cases proceed in the same way as in \hyperlink{Claim4}{\color{darkgreen}the proof of Claim~4} for
\thm{congruence}, but suppressing $\lambda$ and with $\R\equiv_B$ substituted for the blue occurrences of $\R$. In the last
case there are no further changes, so I will not repeat it here. The remaining case needs a few
elaborations---these involve the blue coloured segments in the proof of Claim~4:
\item
Let $p\mathbin=f(p_1,\ldots,p_n)$ and
$q\mathbin=f(q_1,\ldots,q_n)$ where \mbox{$p_i \R q_i$} for $i=1,\ldots,n$.
Let $\pi$ be a proof of \plat{$p\ar{a} p'$} from $P$.\linebreak[4]
By Defs.~\ref{df:proof} and~\ref{df:format}, there must be a pure tyft rule
$r=\frac{H}{f(x_1,...,x_n)\goto{a}t}$ in $R$ and a closed substitution
$\sigma$ with $\sigma(x_i)\mathbin=p_i$ for $i\mathbin=1,...,n$ and $t[\sigma]\mathbin=p'$,
such that for each \plat{$(t_y\ar{c}y)\in H$} the transition \plat{$t_y[\sigma]\mathbin{\ar{c}}\sigma(y)$}
is provable from $P$ by means of a strict subproof of $\pi$.
Next, I define a substitution $\nu\!:\!\var(r)\mathbin\rightarrow\T(\Sigma)$ such that
\begin{enumerate}[(i), leftmargin=*]
\item $\nu(x_i)=q_i$ for $i=1,\ldots,n$,
\item $\sigma(y)\R\equiv_B \nu(y)$ for each $y\mathbin\in \var(r)$,
\item \plat{$P\vdash t_y[\nu]\ar{c} \nu(y)$} for each \plat{$(t_y\ar{c}y)\in H$}.
\end{enumerate}
The definition of $\nu(y)$ and the inference of (i)--(iii) above proceed with induction on the distance of
$y\mathbin\in\var(r)$ from the source of $r$,
\vspace{1ex}

{\it Base case:\/} Let $\nu(x_i):=q_i$ for $i=1,\ldots,n$, so that Property (i) is satisfied.
Regarding Property (ii), $\sigma(x_i)\R \nu(x_i)$ for $i=1,\ldots,n$.
\vspace{1ex}

{\it Induction step:\/} When defining $\nu(y)$ for some $y\mathbin\in\Var$ with \plat{$(t_y\ar{c}y)\in H$},
by induction $\nu(x)$ has been defined already for all $x\mathbin\in\var(t_y)$, so I may assume that
  $\sigma(x)\R\equiv_B \nu(x)$ for all $x\mathbin\in\var(t_y)$, i.e., there exists a substitution
  $\rho\!:\!\var(r)\mathbin\rightarrow\T(\Sigma)$ with
  $\sigma(x)\mathbin{\R}\rho(x)\mathbin{\equiv_B} \nu(x)$ for all $x\mathbin\in\var(t_y)$.
  Now $t_y[\sigma] \mathbin{\R} t_y[\rho]$ by ({\color{red}*})
  and $t_y[\rho] \mathbin{\equiv_B} t_y[\nu]$ by \thm{congruence}.

  By induction on $\pi$, there is an $r_y$ with
  \plat{$P\vdash t_y[\rho]\ar{c} r_y$} and $\sigma(y) \R\equiv_B r_y$.
  By the definition of bisimilarity, there is a $q_y$ with
  \plat{$P\vdash t_y[\nu]\ar{c} q_y$} and $r_y \equiv_B q_y$.
Define $\nu(y):=q_y$. Properties (ii) and (iii) now hold for $y$.
\vspace{1ex}

Take $q':=t[\nu]$. So $p'\mathbin=t[\sigma]\R\equiv t[\nu] \mathbin= q'$ by ({\color{red}*}),
Property (ii) of $\nu$, and \thm{congruence}.
By Defs.~\ref{df:proof} and~\ref{df:format}, together with Property (iii) of $\nu$, this implies\\
\plat{$P \vdash q=f(q_1,\ldots,q_n) \ar{a} t[\nu] =q'$}.
\hfill $\Box$
\end{itemize}
\end{trivlist}
It remains an open question whether the above result can be generalised to the ntyft/ntyxt format
with recursion. A direct combination of the proofs of Thms.~\ref{thm:congruence}
and~\ref{thm:compositionality positive} does not work, however. An attempt in this direction would
substitute either $\R\sqsubseteq_B$ or $\sqsubseteq_B\R$ for the red $\color{red}\R$ in
{\color{darkred}Claim 2} in the proof of \thm{congruence}.
Both attempts fail on the case $p=\rec{X|S}$ and $q=\rec{X|S'}$ in the proof of \thm{compositionality positive}.

The first attempt would from $P\vdash\rec{S'_X|S'}\hoto{a}q'$ infer $P\vdash\rec{S_X|S'}\hoto{a}r'$
by bisimilarity, and then infer $P\vdash\rec{S_X|S}\hoto{a}_\lambda p'$ by induction. However, one may not
use induction, as the transition $\rec{S_X|S'}\hoto{a}r'$ may be derived later than $\rec{X|S'}\hoto{a}q'$.
In fact, if a variant of this approach would work, skipping $\rec{X|S'} \R \rec{X|S}$ from the
definition of $\R$, one could prove a false version of (\ref{comp-recursion-closed}) that assumes the
antecedent only for the single substitution $\rec{\_\!\_ | S'}$ (cf.~Footnote~\ref{onlytwo});
it is trivial to find a counterexample in the GSOS format with unguarded recursion.

The second attempt would from $P\vdash\rec{S'_X|S'}\hoto{a}q'$ infer $P\vdash\rec{S'_X|S}\hoto{a}_\lambda r'$
by induction, and then $P\vdash\rec{S_X|S}\hoto{a}_\lambda p'$ by bisimilarity. The latter step is invalid,
as $\rec{S_X|S'}\hoto{a}_\lambda r'$ is only an overapproximation of $P\vdash \rec{S_X|S'}\hoto{a} r'$.

\bibliographystyle{eptcs}

\begin{thebibliography}{10}
\providecommand{\bibitemdeclare}[2]{}
\providecommand{\surnamestart}{}
\providecommand{\surnameend}{}
\providecommand{\urlprefix}{Available at }
\providecommand{\url}[1]{\texttt{#1}}
\providecommand{\href}[2]{\texttt{#2}}
\providecommand{\urlalt}[2]{\href{#1}{#2}}
\providecommand{\doi}[1]{doi:\urlalt{http://dx.doi.org/#1}{#1}}
\providecommand{\bibinfo}[2]{#2}

\bibitemdeclare{article}{ABIMR12}
\bibitem{ABIMR12}
\bibinfo{author}{L.~\surnamestart Aceto\surnameend},
  \bibinfo{author}{A.~\surnamestart Birgisson\surnameend},
  \bibinfo{author}{A.~\surnamestart Ing{\'{o}}lfsd{\'{o}}ttir\surnameend},
  \bibinfo{author}{M.R. \surnamestart Mousavi\surnameend} \&
  \bibinfo{author}{M.A. \surnamestart Reniers\surnameend}
  (\bibinfo{year}{2012}): \emph{\bibinfo{title}{Rule formats for determinism
  and idempotence}}.
\newblock {\sl \bibinfo{journal}{Science of Computer Programming}}
  \bibinfo{volume}{77}(\bibinfo{number}{7-8}), pp. \bibinfo{pages}{889--907},
  \doi{10.1016/j.scico.2010.04.002}.

\bibitemdeclare{article}{ACIMR11}
\bibitem{ACIMR11}
\bibinfo{author}{L.~\surnamestart Aceto\surnameend},
  \bibinfo{author}{M.~\surnamestart Cimini\surnameend},
  \bibinfo{author}{A.~\surnamestart Ing{\'{o}}lfsd{\'{o}}ttir\surnameend},
  \bibinfo{author}{M.R. \surnamestart Mousavi\surnameend} \&
  \bibinfo{author}{M.A. \surnamestart Reniers\surnameend}
  (\bibinfo{year}{2011}): \emph{\bibinfo{title}{{SOS} rule formats for zero and
  unit elements}}.
\newblock {\sl \bibinfo{journal}{Theoretical Computer Science}}
  \bibinfo{volume}{412}(\bibinfo{number}{28}), pp. \bibinfo{pages}{3045--3071},
  \doi{10.1016/j.tcs.2011.01.024}.

\bibitemdeclare{article}{ACIMR12}
\bibitem{ACIMR12}
\bibinfo{author}{L.~\surnamestart Aceto\surnameend},
  \bibinfo{author}{M.~\surnamestart Cimini\surnameend},
  \bibinfo{author}{A.~\surnamestart Ing{\'{o}}lfsd{\'{o}}ttir\surnameend},
  \bibinfo{author}{M.R. \surnamestart Mousavi\surnameend} \&
  \bibinfo{author}{M.A. \surnamestart Reniers\surnameend}
  (\bibinfo{year}{2012}): \emph{\bibinfo{title}{Rule formats for
  distributivity}}.
\newblock {\sl \bibinfo{journal}{Theoretical Computer Science}}
  \bibinfo{volume}{458}, pp. \bibinfo{pages}{1--28},
  \doi{10.1016/j.tcs.2012.07.036}.

\bibitemdeclare{incollection}{AFV00}
\bibitem{AFV00}
\bibinfo{author}{L.~\surnamestart Aceto\surnameend}, \bibinfo{author}{W.J.
  \surnamestart Fokkink\surnameend} \& \bibinfo{author}{C.~\surnamestart
  Verhoef\surnameend} (\bibinfo{year}{2000}): \emph{\bibinfo{title}{Structural
  Operational Semantics}}.
\newblock In \bibinfo{editor}{J.A. \surnamestart Bergstra\surnameend},
  \bibinfo{editor}{A.~\surnamestart Ponse\surnameend} \& \bibinfo{editor}{S.A.
  \surnamestart Smolka\surnameend}, editors: {\sl \bibinfo{booktitle}{Handbook
  of Process Algebra}}, chapter~\bibinfo{chapter}{3},
  \bibinfo{publisher}{Elsevier}, pp. \bibinfo{pages}{197--292}.

\bibitemdeclare{article}{BM15}
\bibitem{BM15}
\bibinfo{author}{G.~\surnamestart Bacci\surnameend} \&
  \bibinfo{author}{M.~\surnamestart Miculan\surnameend} (\bibinfo{year}{2015}):
  \emph{\bibinfo{title}{Structural operational semantics for continuous state
  stochastic transition systems}}.
\newblock {\sl \bibinfo{journal}{Journal of Computer and System Sciences}}
  \bibinfo{volume}{81}(\bibinfo{number}{5}), pp. \bibinfo{pages}{834--858},
  \doi{10.1016/j.jcss.2014.12.003}.

\bibitemdeclare{book}{BBR10}
\bibitem{BBR10}
\bibinfo{author}{J.~C.~M. \surnamestart Baeten\surnameend},
  \bibinfo{author}{T.~\surnamestart Basten\surnameend} \&
  \bibinfo{author}{M.~A. \surnamestart Reniers\surnameend}
  (\bibinfo{year}{2010}): \emph{\bibinfo{title}{Process Algebra: Equational
  Theories of Communicating Processes}}.
\newblock \bibinfo{publisher}{Cambridge University Press}.

\bibitemdeclare{book}{Ba90}
\bibitem{Ba90}
\bibinfo{editor}{J.C.M. \surnamestart Baeten\surnameend}, editor
  (\bibinfo{year}{1990}): \emph{\bibinfo{title}{Applications of Process
  Algebra}}.
\newblock \bibinfo{series}{Cambridge Tracts in Theoretical Computer Science
  17}, \bibinfo{publisher}{Cambridge University Press}.

\bibitemdeclare{article}{Bartels02}
\bibitem{Bartels02}
\bibinfo{author}{F.~\surnamestart Bartels\surnameend} (\bibinfo{year}{2002}):
  \emph{\bibinfo{title}{{GSOS} for Probabilistic Transition Systems}}.
\newblock {\sl \bibinfo{journal}{Electr. Notes Theor. Comput. Sci.}}
  \bibinfo{volume}{65}(\bibinfo{number}{1}), pp. \bibinfo{pages}{29--53},
  \doi{10.1016/S1571-0661(04)80358-X}.

\bibitemdeclare{article}{Bl95}
\bibitem{Bl95}
\bibinfo{author}{B.~\surnamestart Bloom\surnameend} (\bibinfo{year}{1995}):
  \emph{\bibinfo{title}{Structural operational semantics for weak
  bisimulations}}.
\newblock {\sl \bibinfo{journal}{Theoretical Computer Science}}
  \bibinfo{volume}{146}, pp. \bibinfo{pages}{25--68},
  \doi{10.1016/0304-3975(94)00152-9}.

\bibitemdeclare{article}{BFG04}
\bibitem{BFG04}
\bibinfo{author}{B.~\surnamestart Bloom\surnameend}, \bibinfo{author}{W.J.
  \surnamestart Fokkink\surnameend} \& \bibinfo{author}{R.J.~van \surnamestart
  Glabbeek\surnameend} (\bibinfo{year}{2004}):
  \emph{\bibinfo{title}{Precongruence Formats for Decorated Trace Semantics}}.
\newblock {\sl \bibinfo{journal}{Transactions on Computational Logic}}
  \bibinfo{volume}{5}(\bibinfo{number}{1}), pp. \bibinfo{pages}{26--78},
  \doi{10.1145/963927.963929}.

\bibitemdeclare{article}{BIM95}
\bibitem{BIM95}
\bibinfo{author}{B.~\surnamestart Bloom\surnameend},
  \bibinfo{author}{S.~\surnamestart Istrail\surnameend} \&
  \bibinfo{author}{A.R. \surnamestart Meyer\surnameend} (\bibinfo{year}{1995}):
  \emph{\bibinfo{title}{Bisimulation Can't be Traced}}.
\newblock {\sl \bibinfo{journal}{Journal of the ACM}}
  \bibinfo{volume}{42}(\bibinfo{number}{1}), pp. \bibinfo{pages}{232--268},
  \doi{10.1145/200836.200876}.

\bibitemdeclare{article}{BolG96}
\bibitem{BolG96}
\bibinfo{author}{R.N. \surnamestart Bol\surnameend} \& \bibinfo{author}{J.F.
  \surnamestart Groote\surnameend} (\bibinfo{year}{1996}):
  \emph{\bibinfo{title}{The meaning of negative premises in transition system
  specifications}}.
\newblock {\sl \bibinfo{journal}{Journal of the ACM}}
  \bibinfo{volume}{43}(\bibinfo{number}{5}), pp. \bibinfo{pages}{863--914},
  \doi{10.1145/234752.234756}.

\bibitemdeclare{techreport}{BrGH16b}
\bibitem{BrGH16b}
\bibinfo{author}{E.~\surnamestart Bres\surnameend}, \bibinfo{author}{R.J.~van
  \surnamestart Glabbeek\surnameend} \& \bibinfo{author}{P.~\surnamestart
  H{\"o}fner\surnameend} (\bibinfo{year}{2016}): \emph{\bibinfo{title}{A Timed
  Process Algebra for Wireless Networks with an Application in Routing}}.
\newblock \bibinfo{type}{Technical Report} \bibinfo{number}{9145},
  \bibinfo{institution}{NICTA}.
\newblock \urlprefix\url{http://arxiv.org/abs/1606.03663}.
\newblock \bibinfo{note}{Extended abstract in P. Thiemann, editor: {\rm Proc.\
  ESOP'16}, LNCS 9632, Springer, 2016, pp. 95-122}.

\bibitemdeclare{article}{BHR84}
\bibitem{BHR84}
\bibinfo{author}{S.D. \surnamestart Brookes\surnameend},
  \bibinfo{author}{C.A.R. \surnamestart Hoare\surnameend} \&
  \bibinfo{author}{A.W. \surnamestart Roscoe\surnameend}
  (\bibinfo{year}{1984}): \emph{\bibinfo{title}{A theory of communicating
  sequential processes}}.
\newblock {\sl \bibinfo{journal}{Journal of the ACM}}
  \bibinfo{volume}{31}(\bibinfo{number}{3}), pp. \bibinfo{pages}{560--599},
  \doi{10.1145/828.833}.

\bibitemdeclare{inproceedings}{CMR08}
\bibitem{CMR08}
\bibinfo{author}{S.~\surnamestart Cranen\surnameend}, \bibinfo{author}{M.~R.
  \surnamestart Mousavi\surnameend} \& \bibinfo{author}{M.~A. \surnamestart
  Reniers\surnameend} (\bibinfo{year}{2008}): \emph{\bibinfo{title}{A Rule
  Format for Associativity}}.
\newblock In \bibinfo{editor}{F.~\surnamestart van Breugel\surnameend} \&
  \bibinfo{editor}{M.~\surnamestart Chechik\surnameend}, editors: {\sl
  \bibinfo{booktitle}{Proc.\ CONCUR'08}}, {\sl \bibinfo{series}{\rm LNCS}}
  \bibinfo{volume}{5201}, \bibinfo{publisher}{Springer}, pp.
  \bibinfo{pages}{447--461}, \doi{10.1007/978-3-540-85361-9\_35}.

\bibitemdeclare{article}{DGL16}
\bibitem{DGL16}
\bibinfo{author}{P.R. \surnamestart D'Argenio\surnameend},
  \bibinfo{author}{D.~\surnamestart Gebler\surnameend} \& \bibinfo{author}{M.D.
  \surnamestart Lee\surnameend} (\bibinfo{year}{2016}): \emph{\bibinfo{title}{A
  general {SOS} theory for the specification of probabilistic transition
  systems}}.
\newblock {\sl \bibinfo{journal}{Information and Computation}}
  \bibinfo{volume}{249}, pp. \bibinfo{pages}{76--109},
  \doi{10.1016/j.ic.2016.03.009}.

\bibitemdeclare{book}{Fok00}
\bibitem{Fok00}
\bibinfo{author}{W.~J. \surnamestart Fokkink\surnameend}
  (\bibinfo{year}{2000}): \emph{\bibinfo{title}{Introduction to Process
  Algebra}}.
\newblock \bibinfo{series}{Texts in Theoretical Computer Science, An EATCS
  Series}, \bibinfo{publisher}{Springer}, \doi{10.1007/978-3-662-04293-9}.

\bibitemdeclare{article}{Fok00b}
\bibitem{Fok00b}
\bibinfo{author}{W.J. \surnamestart Fokkink\surnameend} (\bibinfo{year}{2000}):
  \emph{\bibinfo{title}{Rooted Branching Bisimulation as a Congruence}}.
\newblock {\sl \bibinfo{journal}{Journal of Computer and System Sciences}}
  \bibinfo{volume}{60}(\bibinfo{number}{1}), pp. \bibinfo{pages}{13--37},
  \doi{10.1006/jcss.1999.1663}.

\bibitemdeclare{article}{FG96}
\bibitem{FG96}
\bibinfo{author}{W.J. \surnamestart Fokkink\surnameend} \&
  \bibinfo{author}{R.J.~van \surnamestart Glabbeek\surnameend}
  (\bibinfo{year}{1996}): \emph{\bibinfo{title}{Ntyft/ntyxt rules reduce to
  ntree rules}}.
\newblock {\sl \bibinfo{journal}{Information and Computation}}
  \bibinfo{volume}{126}(\bibinfo{number}{1}), pp. \bibinfo{pages}{1--10},
  \doi{10.1006/inco.1996.0030}.

\bibitemdeclare{inproceedings}{FG16a}
\bibitem{FG16a}
\bibinfo{author}{W.J. \surnamestart Fokkink\surnameend} \&
  \bibinfo{author}{R.J.~van \surnamestart Glabbeek\surnameend}
  (\bibinfo{year}{2016}): \emph{\bibinfo{title}{Divide and Congruence II: Delay
  and Weak Bisimilarity}}.
\newblock In: {\sl \bibinfo{booktitle}{{\rm Proc.\ LICS'16}}},
  \bibinfo{publisher}{ACM}, pp. \bibinfo{pages}{778--787},
  \doi{10.1145/2933575.2933590}.

\bibitemdeclare{misc}{FG17}
\bibitem{FG17}
\bibinfo{author}{W.J. \surnamestart Fokkink\surnameend} \&
  \bibinfo{author}{R.J.~van \surnamestart Glabbeek\surnameend}
  (\bibinfo{year}{2017}): \emph{\bibinfo{title}{Precongruence Formats with
  Lookahead through Modal Decomposition}}.
\newblock \urlprefix\url{http://theory.stanford.edu/~rvg/abstracts.html#122}.

\bibitemdeclare{article}{FGW06}
\bibitem{FGW06}
\bibinfo{author}{W.J. \surnamestart Fokkink\surnameend},
  \bibinfo{author}{R.J.~van \surnamestart Glabbeek\surnameend} \&
  \bibinfo{author}{P.~de \surnamestart Wind\surnameend} (\bibinfo{year}{2006}):
  \emph{\bibinfo{title}{Compositionality of {Hennessy-Milner} Logic by
  Structural Operational Semantics}}.
\newblock {\sl \bibinfo{journal}{Theoretical Computer Science}}
  \bibinfo{volume}{354}(\bibinfo{number}{3}), pp. \bibinfo{pages}{421--440},
  \doi{10.1016/j.tcs.2005.11.035}.

\bibitemdeclare{article}{FGW12}
\bibitem{FGW12}
\bibinfo{author}{W.J. \surnamestart Fokkink\surnameend},
  \bibinfo{author}{R.J.~van \surnamestart Glabbeek\surnameend} \&
  \bibinfo{author}{P.~de \surnamestart Wind\surnameend} (\bibinfo{year}{2012}):
  \emph{\bibinfo{title}{Divide and congruence: From decomposition of modal
  formulas to preservation of branching and $\eta$-bisimilarity}}.
\newblock {\sl \bibinfo{journal}{Information and Computation}}
  \bibinfo{volume}{214}, pp. \bibinfo{pages}{59--85},
  \doi{10.1016/j.ic.2011.10.011}.

\bibitemdeclare{inproceedings}{GLMS11}
\bibitem{GLMS11}
\bibinfo{author}{H.~\surnamestart Garavel\surnameend},
  \bibinfo{author}{F.~\surnamestart Lang\surnameend},
  \bibinfo{author}{R.~\surnamestart Mateescu\surnameend} \&
  \bibinfo{author}{W.~\surnamestart Serwe\surnameend} (\bibinfo{year}{2011}):
  \emph{\bibinfo{title}{{CADP} 2010: {A} Toolbox for the Construction and
  Analysis of Distributed Processes}}.
\newblock In \bibinfo{editor}{P.A. \surnamestart Abdulla\surnameend} \&
  \bibinfo{editor}{K.R.M. \surnamestart Leino\surnameend}, editors: {\sl
  \bibinfo{booktitle}{{\rm Proc.\ TACAS'11}}}, {\sl \bibinfo{series}{\rm LNCS}}
  \bibinfo{volume}{6605}, \bibinfo{publisher}{Springer}, pp.
  \bibinfo{pages}{372--387}, \doi{10.1007/978-3-642-19835-9_33}.

\bibitemdeclare{inproceedings}{GS13}
\bibitem{GS13}
\bibinfo{author}{D.~\surnamestart Gebler\surnameend} \&
  \bibinfo{author}{S.~\surnamestart Tini\surnameend} (\bibinfo{year}{2013}):
  \emph{\bibinfo{title}{Compositionality of Approximate Bisimulation for
  Probabilistic Systems}}.
\newblock In \bibinfo{editor}{Johannes \surnamestart Borgstr\"om\surnameend} \&
  \bibinfo{editor}{Bas \surnamestart Luttik\surnameend}, editors: {\sl
  \bibinfo{booktitle}{{\rm Proc.\ EXPRESS/SOS'13}}}, {\sl \bibinfo{series}{{\rm
  EPTCS}}} \bibinfo{volume}{120}, \bibinfo{publisher}{Open Publishing
  Association}, pp. \bibinfo{pages}{32--46}, \doi{10.4204/EPTCS.120.4}.

\bibitemdeclare{article}{GRS91}
\bibitem{GRS91}
\bibinfo{author}{A.~van \surnamestart Gelder\surnameend},
  \bibinfo{author}{K.~\surnamestart Ross\surnameend} \& \bibinfo{author}{J.S.
  \surnamestart Schlipf\surnameend} (\bibinfo{year}{1991}):
  \emph{\bibinfo{title}{The well-founded semantics for general logic
  programs}}.
\newblock {\sl \bibinfo{journal}{Journal of the ACM}}
  \bibinfo{volume}{38}(\bibinfo{number}{3}), pp. \bibinfo{pages}{620--650},
  \doi{10.1145/116825.116838}.

\bibitemdeclare{inproceedings}{vG93d}
\bibitem{vG93d}
\bibinfo{author}{R.J.~van \surnamestart Glabbeek\surnameend}
  (\bibinfo{year}{1993}): \emph{\bibinfo{title}{Full abstraction in structural
  operational semantics (extended abstract)}}.
\newblock In \bibinfo{editor}{M.~\surnamestart Nivat\surnameend},
  \bibinfo{editor}{C.~\surnamestart Rattray\surnameend},
  \bibinfo{editor}{T.~\surnamestart Rus\surnameend} \&
  \bibinfo{editor}{G.~\surnamestart Scollo\surnameend}, editors: {\sl
  \bibinfo{booktitle}{{\rm Proc.\ AMAST'93}}}, \bibinfo{series}{Workshops in
  Computing}, \bibinfo{publisher}{Springer}, pp. \bibinfo{pages}{77--84}.
\newblock \urlprefix\url{http://theory.stanford.edu/~rvg/abstracts.html#28}.

\bibitemdeclare{inproceedings}{vG93}
\bibitem{vG93}
\bibinfo{author}{R.J.~van \surnamestart Glabbeek\surnameend}
  (\bibinfo{year}{1993}): \emph{\bibinfo{title}{The Linear Time -- Branching
  Time Spectrum {II}; The semantics of sequential systems with silent moves
  (extended abstract)}}.
\newblock In \bibinfo{editor}{E.~\surnamestart Best\surnameend}, editor: {\sl
  \bibinfo{booktitle}{{\rm Proc.\ CONCUR'93}}}, {\sl \bibinfo{series}{\rm
  LNCS}} \bibinfo{volume}{715}, \bibinfo{publisher}{Springer}, pp.
  \bibinfo{pages}{66--81}, \doi{10.1007/3-540-57208-2\_6}.

\bibitemdeclare{inproceedings}{vG94a}
\bibitem{vG94a}
\bibinfo{author}{R.J.~van \surnamestart Glabbeek\surnameend}
  (\bibinfo{year}{1994}): \emph{\bibinfo{title}{On the expressiveness of {ACP}
  (extended abstract)}}.
\newblock In \bibinfo{editor}{A.~\surnamestart Ponse\surnameend},
  \bibinfo{editor}{C.~\surnamestart Verhoef\surnameend} \&
  \bibinfo{editor}{S.F.M. \surnamestart van Vlijmen\surnameend}, editors: {\sl
  \bibinfo{booktitle}{{\rm Proceedings First Workshop on the} Algebra of
  Communicating Processes, {\rm ACP94, Utrecht, The Netherlands, May 1994}}},
  \bibinfo{series}{Workshops in Computing}, \bibinfo{publisher}{Springer}, pp.
  \bibinfo{pages}{188--217}.
\newblock \urlprefix\url{http://theory.stanford.edu/~rvg/abstracts.html#31}.

\bibitemdeclare{techreport}{vG95}
\bibitem{vG95}
\bibinfo{author}{R.J.~van \surnamestart Glabbeek\surnameend}
  (\bibinfo{year}{1995}): \emph{\bibinfo{title}{The Meaning of Negative
  Premises in Transition System Specifications {II}}}.
\newblock \bibinfo{type}{Technical Report} \bibinfo{number}{STAN-CS-TN-95-16},
  \bibinfo{institution}{Stanford University}.
\newblock \urlprefix\url{http://theory.stanford.edu/~rvg/abstracts.html#32}.
\newblock \bibinfo{note}{Extended abstract in F. Meyer auf der Heide \& B.
  Monien, editors: Proc.\ ICALP'96, LNCS 1099, Springer, pp. 502--513, doi:
  \urlalt{http://dx.doi.org/10.1007/3-540-61440-0\_154}
  {10.1007/3-540-61440-0\_154}}.

\bibitemdeclare{article}{vG04}
\bibitem{vG04}
\bibinfo{author}{R.J.~van \surnamestart Glabbeek\surnameend}
  (\bibinfo{year}{2004}): \emph{\bibinfo{title}{The Meaning of Negative
  Premises in Transition System Specifications {II}}}.
\newblock {\sl \bibinfo{journal}{Journal of Logic and Algebraic Programming}}
  \bibinfo{volume}{60--61}, pp. \bibinfo{pages}{229--258},
  \doi{10.1016/j.jlap.2004.03.007}.

\bibitemdeclare{article}{vG11}
\bibitem{vG11}
\bibinfo{author}{R.J.~van \surnamestart Glabbeek\surnameend}
  (\bibinfo{year}{2011}): \emph{\bibinfo{title}{On Cool Congruence Formats for
  Weak Bisimulations}}.
\newblock {\sl \bibinfo{journal}{Theoretical Computer Science}}
  \bibinfo{volume}{412}(\bibinfo{number}{28}), pp. \bibinfo{pages}{3283--3302},
  \doi{10.1016/j.tcs.2011.02.036}.

\bibitemdeclare{inproceedings}{vG12}
\bibitem{vG12}
\bibinfo{author}{R.J.~van \surnamestart Glabbeek\surnameend}
  (\bibinfo{year}{2012}): \emph{\bibinfo{title}{Musings on Encodings and
  Expressiveness}}.
\newblock In \bibinfo{editor}{B.~\surnamestart Luttik\surnameend} \&
  \bibinfo{editor}{M.A. \surnamestart Reniers\surnameend}, editors: {\sl
  \bibinfo{booktitle}{{\rm Proc. EXPRESS/SOS'12}}}, {\sl \bibinfo{series}{{\rm
  EPTCS}}}~\bibinfo{volume}{89}, \bibinfo{publisher}{Open Publishing
  Association}, pp. \bibinfo{pages}{81--98}, \doi{10.4204/EPTCS.89.7}.

\bibitemdeclare{article}{GLT09b}
\bibitem{GLT09b}
\bibinfo{author}{R.J.~van \surnamestart Glabbeek\surnameend},
  \bibinfo{author}{B.~\surnamestart Luttik\surnameend} \&
  \bibinfo{author}{N.~\surnamestart Tr\v{c}ka\surnameend}
  (\bibinfo{year}{2009}): \emph{\bibinfo{title}{Branching Bisimilarity with
  Explicit Divergence}}.
\newblock {\sl \bibinfo{journal}{Fundamenta Informaticae}}
  \bibinfo{volume}{93}(\bibinfo{number}{4}), pp. \bibinfo{pages}{371--392}.

\bibitemdeclare{inproceedings}{vGW89}
\bibitem{vGW89}
\bibinfo{author}{R.J.~van \surnamestart Glabbeek\surnameend} \&
  \bibinfo{author}{W.P. \surnamestart Weijland\surnameend}
  (\bibinfo{year}{1989}): \emph{\bibinfo{title}{Branching Time and Abstraction
  in Bisimulation Semantics (extended abstract)}}.
\newblock In \bibinfo{editor}{G.X. \surnamestart Ritter\surnameend}, editor:
  {\sl \bibinfo{booktitle}{Information Processing 89, {\rm Proceedings of the
  IFIP 11th World Computer Congress, San Francisco 1989}}},
  \bibinfo{publisher}{North-Holland}, pp. \bibinfo{pages}{613--618}.
\newblock \bibinfo{note}{Full version in {\sl Jounal of the ACM} 43(3), 1996,
  pp. 555--600}.

\bibitemdeclare{article}{Gr93}
\bibitem{Gr93}
\bibinfo{author}{J.F. \surnamestart Groote\surnameend} (\bibinfo{year}{1993}):
  \emph{\bibinfo{title}{Transition System Specifications with Negative
  Premises}}.
\newblock {\sl \bibinfo{journal}{Theoretical Computer Science}}
  \bibinfo{volume}{118}, pp. \bibinfo{pages}{263--299},
  \doi{10.1016/0304-3975(93)90111-6}.

\bibitemdeclare{book}{GM14}
\bibitem{GM14}
\bibinfo{author}{J.F. \surnamestart Groote\surnameend} \& \bibinfo{author}{M.R.
  \surnamestart Mousavi\surnameend} (\bibinfo{year}{2014}):
  \emph{\bibinfo{title}{Modeling and Analysis of Communicating Systems}}.
\newblock \bibinfo{publisher}{MIT Press}.

\bibitemdeclare{article}{GMR06}
\bibitem{GMR06}
\bibinfo{author}{J.F. \surnamestart Groote\surnameend}, \bibinfo{author}{M.R.
  \surnamestart Mousavi\surnameend} \& \bibinfo{author}{M.A. \surnamestart
  Reniers\surnameend} (\bibinfo{year}{2006}): \emph{\bibinfo{title}{A Hierarchy
  of {SOS} Rule Formats}}.
\newblock {\sl \bibinfo{journal}{Electr. Notes Theor. Comput. Sci.}}
  \bibinfo{volume}{156}(\bibinfo{number}{1}), pp. \bibinfo{pages}{3--25},
  \doi{10.1016/j.entcs.2005.11.077}.

\bibitemdeclare{article}{GrV92}
\bibitem{GrV92}
\bibinfo{author}{J.F. \surnamestart Groote\surnameend} \& \bibinfo{author}{F.W.
  \surnamestart Vaandrager\surnameend} (\bibinfo{year}{1992}):
  \emph{\bibinfo{title}{Structured Operational Semantics and Bisimulation as a
  Congruence}}.
\newblock {\sl \bibinfo{journal}{Information and Computation}}
  \bibinfo{volume}{100}(\bibinfo{number}{2}), pp. \bibinfo{pages}{202--260},
  \doi{10.1016/0890-5401(92)90013-6}.

\bibitemdeclare{article}{KS13}
\bibitem{KS13}
\bibinfo{author}{B.~\surnamestart Klin\surnameend} \&
  \bibinfo{author}{V.~\surnamestart Sassone\surnameend} (\bibinfo{year}{2013}):
  \emph{\bibinfo{title}{Structural operational semantics for stochastic and
  weighted transition systems}}.
\newblock {\sl \bibinfo{journal}{Information and Computation}}
  \bibinfo{volume}{227}, pp. \bibinfo{pages}{58--83},
  \doi{10.1016/j.ic.2013.04.001}.

\bibitemdeclare{article}{LT09}
\bibitem{LT09}
\bibinfo{author}{R.~\surnamestart Lanotte\surnameend} \&
  \bibinfo{author}{S.~\surnamestart Tini\surnameend} (\bibinfo{year}{2009}):
  \emph{\bibinfo{title}{Probabilistic bisimulation as a congruence}}.
\newblock {\sl \bibinfo{journal}{ACM Transactions on Computational Logic}}
  \bibinfo{volume}{10}(\bibinfo{number}{2}):\bibinfo{eid}{9},
  \doi{10.1145/1462179.1462181}.

\bibitemdeclare{inproceedings}{LT88}
\bibitem{LT88}
\bibinfo{author}{K.G. \surnamestart Larsen\surnameend} \&
  \bibinfo{author}{B.~\surnamestart Thomsen\surnameend} (\bibinfo{year}{1988}):
  \emph{\bibinfo{title}{A Modal Process Logic}}.
\newblock In: {\sl \bibinfo{booktitle}{{\rm Proc.\ LICS'88}}},
  \bibinfo{publisher}{{IEEE} Computer Society Press}, pp.
  \bibinfo{pages}{203--210}, \doi{10.1109/LICS.1988.5119}.

\newpage

\bibitemdeclare{inproceedings}{MM14}
\bibitem{MM14}
\bibinfo{author}{M.~\surnamestart Miculan\surnameend} \&
  \bibinfo{author}{M.~\surnamestart Peressotti\surnameend}
  (\bibinfo{year}{2014}): \emph{\bibinfo{title}{GSOS for non-deterministic
  processes with quantitative aspects}}.
\newblock In \bibinfo{editor}{Nathalie \surnamestart Bertrand\surnameend} \&
  \bibinfo{editor}{Luca \surnamestart Bortolussi\surnameend}, editors: {\sl
  \bibinfo{booktitle}{{\rm Proc.~QAPL'14}}}, {\sl \bibinfo{series}{{\rm
  EPTCS}}} \bibinfo{volume}{154}, \bibinfo{publisher}{Open Publishing
  Association}, pp. \bibinfo{pages}{17--33}, \doi{10.4204/EPTCS.154.2}.

\bibitemdeclare{incollection}{Mi90ccs}
\bibitem{Mi90ccs}
\bibinfo{author}{R.~\surnamestart Milner\surnameend} (\bibinfo{year}{1990}):
  \emph{\bibinfo{title}{Operational and algebraic semantics of concurrent
  processes}}.
\newblock In \bibinfo{editor}{J.~\surnamestart van Leeuwen\surnameend}, editor:
  {\sl \bibinfo{booktitle}{Handbook of Theoretical Computer Science}},
  chapter~\bibinfo{chapter}{19}, \bibinfo{publisher}{Elsevier Science
  Publishers B.V. (North-Holland)}, pp. \bibinfo{pages}{1201--1242}.
\newblock \bibinfo{note}{Alternatively see{ \em Communication and Concurrency},
  Prentice-Hall, Englewood Cliffs, 1989, of which an earlier version appeared
  as{ \em A Calculus of Communicating Systems}, LNCS 92, Springer, 1980}.

\bibitemdeclare{article}{MRG05}
\bibitem{MRG05}
\bibinfo{author}{M.R. \surnamestart Mousavi\surnameend}, \bibinfo{author}{M.A.
  \surnamestart Reniers\surnameend} \& \bibinfo{author}{J.F. \surnamestart
  Groote\surnameend} (\bibinfo{year}{2005}): \emph{\bibinfo{title}{A syntactic
  commutativity format for {SOS}}}.
\newblock {\sl \bibinfo{journal}{Information Processing Letters}}
  \bibinfo{volume}{93}(\bibinfo{number}{5}), pp. \bibinfo{pages}{217--223},
  \doi{10.1016/j.ipl.2004.11.007}.

\bibitemdeclare{article}{Pl04}
\bibitem{Pl04}
\bibinfo{author}{G.D. \surnamestart Plotkin\surnameend} (\bibinfo{year}{2004}):
  \emph{\bibinfo{title}{A Structural Approach to Operational Semantics}}.
\newblock {\sl \bibinfo{journal}{The Journal of Logic and Algebraic
  Programming}} \bibinfo{volume}{60--61}, pp. \bibinfo{pages}{17--139},
  \doi{10.1016/j.jlap.2004.05.001}.
\newblock \bibinfo{note}{Originally appeared in 1981}.

\bibitemdeclare{article}{Prz90}
\bibitem{Prz90}
\bibinfo{author}{T.C. \surnamestart Przymusinski\surnameend}
  (\bibinfo{year}{1990}): \emph{\bibinfo{title}{The Well-founded Semantics
  Coincides with the Three-valued Stable Semantics}}.
\newblock {\sl \bibinfo{journal}{Fundamenta Informaticae}}
  \bibinfo{volume}{XIII}(\bibinfo{number}{4}), pp. \bibinfo{pages}{445--463}.

\bibitemdeclare{phdthesis}{dS84}
\bibitem{dS84}
\bibinfo{author}{R.~de \surnamestart Simone\surnameend} (\bibinfo{year}{1984}):
  \emph{\bibinfo{title}{Calculabilit\'{e} et Expressivit\'{e} dans l'Algebra de
  Processus Parall\`{e}les {\sc Meije}}}.
\newblock \bibinfo{type}{Th\`{e}se de $3^e$ cycle}, \bibinfo{school}{Univ.
  Paris 7}.

\bibitemdeclare{article}{dS85}
\bibitem{dS85}
\bibinfo{author}{R.~de \surnamestart Simone\surnameend} (\bibinfo{year}{1985}):
  \emph{\bibinfo{title}{Higher-level synchronising devices in {{\sc
  Meije}-SCCS}}}.
\newblock {\sl \bibinfo{journal}{Theoretical Computer Science}}
  \bibinfo{volume}{37}, pp. \bibinfo{pages}{245--267},
  \doi{10.1016/0304-3975(85)90093-3}.

\bibitemdeclare{inproceedings}{Ul92a}
\bibitem{Ul92a}
\bibinfo{author}{I.~\surnamestart Ulidowski\surnameend} (\bibinfo{year}{1992}):
  \emph{\bibinfo{title}{Equivalences on Observable Processes}}.
\newblock In: {\sl \bibinfo{booktitle}{{\rm Proc.\ LICS'92}}},
  \bibinfo{publisher}{IEEE Computer Society Press}, pp.
  \bibinfo{pages}{148--159}, \doi{10.1109/LICS.1992.185529}.

\bibitemdeclare{article}{Ul00}
\bibitem{Ul00}
\bibinfo{author}{I.~\surnamestart Ulidowski\surnameend} (\bibinfo{year}{2000}):
  \emph{\bibinfo{title}{Finite axiom systems for testing preorder and De Simone
  process languages}}.
\newblock {\sl \bibinfo{journal}{Theoretical Computer Science}}
  \bibinfo{volume}{239}(\bibinfo{number}{1}), pp. \bibinfo{pages}{97--139},
  \doi{10.1016/S0304-3975(99)00214-5}.

\bibitemdeclare{article}{UP02}
\bibitem{UP02}
\bibinfo{author}{I.~\surnamestart Ulidowski\surnameend} \&
  \bibinfo{author}{I.~\surnamestart Phillips\surnameend}
  (\bibinfo{year}{2002}): \emph{\bibinfo{title}{Ordered {SOS} Rules and Process
  Languages for Branching and Eager Bisimulations}}.
\newblock {\sl \bibinfo{journal}{Information and Computation}}
  \bibinfo{volume}{178}(\bibinfo{number}{1}), pp. \bibinfo{pages}{180--213},
  \doi{10.1006/inco.2002.3161}.

\end{thebibliography}

\end{document}